\newif\ifsubmission\submissionfalse
\newif\ifanon\anonfalse
\newif\ifqf\qffalse
\newif\ifcomments\commentstrue
  \definecolor{refColor}{rgb}{1,0,0}
  \setlist[itemize]{noitemsep, topsep=0pt, leftmargin=.2in}
  \setlist[enumerate]{noitemsep, topsep=0pt, leftmargin=.2in}
  \setlist[description]{noitemsep,topsep=0pt,leftmargin=\parindent,labelindent=0pt}
  \newcommand{\subh}[1]{\par \vspace{2pt} \noindent \textbf{{#1}}}
  \renewcommand{\paragraph}[1]{\subh{#1}}
  \newcommand{\itparagraph}[1]{\par \vspace{2pt} \noindent \textit{{#1}}}
  \newenvironment{proof}{{\bf Proof:}}{\hfill\rule{2mm}{2mm}}
  \newenvironment{proofsketch}{{\bf Proof (sketch):}}{\hfill\rule{2mm}{2mm}}
  \definecolor{refColor}{rgb}{0,0.2,0.5}
  \setlist[itemize]{itemsep=2pt, topsep=0pt, leftmargin=.5in}
  \setlist[enumerate]{itemsep=2pt, topsep=0pt, leftmargin=.5in}
  \setlist[description]{itemsep=2pt,topsep=0pt,leftmargin=\parindent,labelindent=0pt}
  \newcommand{\subh}[1]{\par \medskip \noindent \textbf{{#1}}}
  \renewcommand{\paragraph}[1]{\subh{#1}}
  \newcommand{\itparagraph}[1]{\par \medskip \noindent \textit{{#1}}}
\newcommand{\xMapsto}[2][]{\ext@arrow 0599{\Mapstofill@}{#1}{#2}}
\def\Mapstofill@{\arrowfill@{\Mapstochar\Relbar}\Relbar\Rightarrow}
\makeatletter \g@addto@macro{\UrlBreaks}{\do\/\do\-} \makeatother  
\algnewcommand\algorithmicinput{\textbf{Input:}}
\algnewcommand\Input{\item[\algorithmicinput]}
\algnewcommand\algorithmicrand{\textbf{Randomness:}}
\algnewcommand\Rand{\item[\algorithmicrand]}
\algnewcommand\algorithmicparam{\textbf{Parameters:}}
\algnewcommand\Param{\item[\algorithmicparam]}
\newtheorem{theorem}{Theorem}
\newtheorem{lemma}{Lemma}
\newtheorem{definition}{Definition}
\newtheorem{threatmodel}{Threat Model}
\newtheorem{construction}{Construction}
\else \theoremstyle{remark} \fi
\newtheorem{remark}{Remark}
\newcommand*{\StrikeThruDistance}{0.1cm}%
  \newcommand{\cmt}[1]{{\color{red}#1}}
  \newcommand{\sunoo}[1]{{\textcolor{violet}{/* SP: #1 */}}}
  \newcommand{\mike}[1]{{\textcolor{blue}{/* MS: #1 */}}}
  \newcommand{\matt}[1]{{\textcolor{olive}{/* MG: #1 */}}}
  \newcommand{\cmt}[1]{\ignorespaces}
  \newcommand{\sunoo}[1]{\ignorespaces}
  \newcommand{\mike}[1]{\ignorespaces}
  \newcommand{\matt}[1]{\ignorespaces}
\newcommand{\powset}[1]{\mathbb{P}(#1)}
\newcommand{\eps}{\varepsilon}
\newcommand{\cind}{\approx_{c}}
\newcommand{\DE}{\hat{\Delta}}
\newcommand{\CurrTime}{\mathsf{CurrentTime}}
\newcommand{\KF}{KeyForge\xspace}
\newcommand{\BasicKF}{\KF}
\newcommand{\TF}{TimeForge\xspace}
\newcommand{\Sig}{\Sigma}
\newcommand{\Setup}{\mathsf{Setup}}
\newcommand{\KeyGen}{\mathsf{KeyGen}}
\newcommand{\KeyGenFromRoot}{\mathsf{KeyGen}^\star}
\newcommand{\Sign}{\mathsf{Sign}}
\newcommand{\Verify}{\mathsf{Verify}}
\newcommand{\Expire}{\mathsf{Expire}}
\newcommand{\Forge}{\mathsf{Forge}}
\newcommand{\Compress}{\mathsf{Compress}}
\newcommand{\HIBS}{\mathsf{HIBS}}
\newcommand{\FFS}{\mathsf{FFS}}
\newcommand{\SFFS}{\mathsf{BasicFFS}}
\newcommand{\RO}{\mathcal{O}}
\newcommand{\OSign}{\mathsf{S}}
\newcommand{\OExpire}{\mathsf{E}}
\renewcommand{\sec}{\kappa}
\newcommand{\NN}{\mathbb{N}}
\newcommand{\Msg}{\mathcal{M}}
\newcommand{\Tag}{\mathcal{T}}
\newcommand{\Id}{\mathcal{I}}
\newcommand{\Adv}{\mathcal{A}}
\newcommand{\Email}{\mathsf{E}}
\newcommand{\cA}{{\cal A}}
\newcommand{\cD}{{\cal D}}
\newcommand{\cS}{{\cal S}}
\newcommand{\y}{\mathsf{y}}
\newcommand{\m}{\mathsf{m}}
\renewcommand{\d}{\mathsf{d}}
\renewcommand{\c}{\mathsf{c}}
\newenvironment{customthm}[1]
  {\innercustomthm}
  {\endinnercustomthm}
\begin{document}

\newcommand{\papertitle}{\KF: Mitigating Email Breaches \\ with Forward-Forgeable Signatures}

\ifsubmission
  \title{\Large\bf\papertitle}
  \ifanon
  \else
    \author{
      {\rm Michael Specter}\\
      MIT
      \and
      {\rm Sunoo Park}\\
      MIT \& Harvard
      \and
      {\rm Matthew Green}\\
      Johns Hopkins University
    } 
  \fi
\else
  \title{\papertitle}
  \author{
    Michael Specter \\
    MIT
  \and
    Sunoo Park\\
    MIT \& Harvard
  \and
    Matthew Green\\
    Johns Hopkins University
  }
\fi

\date{}

\maketitle


\ifsubmission
  \vspace{-2cm}
\fi

\begin{abstract}
Email breaches are commonplace, 
and they expose a wealth of personal, business, and political data
that may have devastating consequences.
The current email system allows any attacker who gains access to your email 
to prove the authenticity of the stolen messages to third parties
--- a property arising from a necessary anti-spam / anti-spoofing protocol called DKIM.
This exacerbates the problem of email breaches by
greatly increasing the potential for attackers to damage the users' reputation,
blackmail them, or sell the stolen information to third parties.

In this paper, we introduce \emph{non-attributable email}, which
guarantees that a wide class of adversaries are unable to 
convince any third party of the authenticity of stolen emails.
We formally define non-attributability, and 
present two practical system proposals --- \KF and \TF ~---
that provably achieve non-attributability
while maintaining the important protection against spam and spoofing
that is currently provided by DKIM.
Moreover, we implement \KF and demonstrate that that scheme is practical, 
achieving competitive verification and signing speed while also requiring 42\% \emph{less} bandwidth per email than RSA2048.
\end{abstract}

\section{Introduction}
\label{sec:intro}

Email is the world's largest and most ubiquitous messaging scheme, 
spanning industry, government, and personal use.
Email content comprises a vast trove of sensitive information; 
it is common knowledge that
email breaches can have devastating consequences.

Making matters worse, email protocols today ensure that if a malicious party gains access to your email, 
then she can cryptographically prove the authenticity of the stolen emails to any third party.
This greatly increases the intelligence value of stolen email:
attackers can (even anonymously) damage the reputation of users 
by publicly disseminating sensitive messages, or sell the messages to interested third parties, all with high credibility.
This property of email (``attributability'') arose as an unintended side-effect of 
the DomainKeys Identified Mail (DKIM) standard,\footnote{%
  As noted by a DKIM RFC author \cite{CallasDkimEmail}.
}
whose purpose is to fight spam and spoofing attacks.

The consequences of this design are far from hypothetical. For example, during the 2016 U.S. election season, leaked emails
from the Democratic National Committee (DNC) and other operatives were released to the public 
via Wikileaks~\cite{wikileaks_DNC}. While the original senders did not intentionally authenticate the emails, third parties
were able to verify the authenticity of emails due to the presence of cryptographic signatures from Google's email server~\cite{wikileaks_wikileaks_2016}. In some cases this authentication directly contradicted 
disavowals by party officials~\cite{brazileDenial}.

Regardless of whether one believes the 2016 DNC leaks were ultimately for the better or worse, it is alarming that an unintended side effect of a widely used protocol has created a broader ecosystem for credible propagation and misuse of illicitly obtained private communication. That DKIM materially supports such an ecosystem is clear.
Wikileaks has a webpage about DKIM authentication \cite{wikileaks_wikileaks_2016} indicating that the organization relies heavily on non-repudiability to enhance the credibility of its material. This is corroborated by recent incidents, e.g., \cite{fakedata}, in which email leaks were allegedly salted with falsified data. 

This raises a natural question: can we mitigate the potential harms of such an ecosystem, while maintaining the efficiency and strong spam- and spoofing-resistance of email today? It is a particularly pressing question as new reports of 
data breaches, including email account information, seem to surface every few weeks.
The 2013 breach of every one of Yahoo!'s 3 million email accounts \cite{yahoo} is a vivid reminder that
not only public figures are impacted by the consequences of compromised email.
Though the Yahoo! data has not been publicized,
data from other large breaches has been posted publicly online.\footnote{
E.g., the recent ``Collection 1'' breach implicating 773 million email addresses \cite{collection1}. Note that the ``Collection 1'' breach consists of email addresses and passwords used to log in to websites, not necessarily email accounts.}
Besides election season strategy,
attackers' motives have ranged from financial gain --- such as by selling patient healthcare data
gleaned from emails \cite{healthcare-data} --- to industrial espionage
and monitoring political dissidents and foreign officials \cite{csis}.
In the absence of any outside breach, such incentives may also prompt malicious insiders to sell or leak email data.

\bigskip
 
While the property that stolen email can be cryptographically attributed 
exacerbates email leaks, DKIM itself serves a necessary purpose in the email ecosystem. 
Concretely, DKIM requires email servers to provide a cryptographic signature
on each outgoing email's contents and metadata (including the sender's identity).
This assures the receiving server that the sending domain actually did intend to send the message, 
providing inter-domain accountability in case of spam, and moreover ensuring that spoofing is detectable.

An initial intuition may be that attributability of stolen email
is a necessary side-effect of any spam- and spoofing-resistant email system. Indeed, it is unclear how
a recipient can be certain that an email was legitimately sent
without gaining the ability to convince a third party of the same.

In this work, we challenge the above intuition and construct efficient protocols that
achieve the important spam- and spoofing-resistance
that DKIM provides, while simultaneously \emph{guaranteeing non-attributability} of stolen email.
Our protocols provably ensure that a wide class of adversaries are unable to 
convince any scrupulous third party of the authenticity of stolen emails.

\subsection{System Requirements}

Requirements on email demand certain unique properties not commonly seen in other messaging systems. For example, email is an any-mesh ecosystem; any domain owner on the internet could set up the appropriate DNS records and become an server that interoperates with any other email server. Similarly, a domain may be responsible for signing and verifying anything from hundreds to millions of emails per day, and administrators are likely to have to contend with increasing throughput requirements over time. 

In considering throughput for any domain, while maintaining good constants on computation time for signing and verification of emails is important, it is more important that the service be able to scale --- that adding more resources to the system actually provide linear or better performance guarantees. 
Scalability in terms in terms of interconnection with other mail servers is as important as email throughput.\footnote{The IETF standard for DMARC~\cite{kucherawy_domain-based_nodate} explicitly states that pre-sending agreements is a poor scalability choice for this reason, see also~\cite{thomas_dkim_requirements_nodate}.} 

An unintuitive result of these throughput requirements is that certain types of overhead that would be trivial in other messaging contexts, such as requiring servers to communicate prior to a message being sent or requiring round trips between servers on every message, are unlikely to be viable in this environment. For example, it would be difficult to scale a service that requires constant connection to, and maintain constant state with, an unbounded number of potentially malicious servers.

In addition to efficiency and scalability,
a practical solution for email non-attributability must address the following considerations:

\begin{description}
\item {\bf Untrusted recipients.}
A naive proposal is to rely on recipient behavior to ensure non-attributability:
e.g., having the receiving server delete all DKIM header information upon receipt.
Such solutions are inadequate for a threat model whose very purpose is
to prevent attacks by the recipient and/or the recipient's server (or an attacker who compromises the recipient).
Ideally, non-attributability for a given sender's emails should arise from
the sender's behavior alone. 

\item {\bf Asynchrony.}
Email is inherently asynchronous and must operate across many nodes that are unable to interact directly,
as discussed above.
Thus, schemes based on interactive authentication protocols, such as naturally deniable interactive zero-knowledge proofs
--- while they might suffice for a synchronous communication setting
--- are not viable solutions for email. 
Schemes based on other interactive protocols such as OTR~\cite{borisovOTR} are inappropriate for similar reasons.

\item {{\bf Universal forgeability}
One intuitive definition of email non-attributability
could be that DKIM signatures must be \emph{forgeable} by the recipient of an email
(``recipient forgeability'').
Then, even if the attacker presents validly signed emails a third party,
she lacks credibility: had she really broken into the recipient's email,
she could have produced the signatures herself.
Recipient forgeability could be achieved,
for example, by using replacing DKIM signatures with ring signatures \cite{RST01}
such that a valid signature can be produced by either the sender or the recipient,
as proposed by \cite{AHR05}.

\smallskip
However, recipient forgeability is a weak notion of non-attributability.
A stronger, and more desirable, notion is that DKIM signatures
must be forgeable by \emph{anyone} after a small time delay to prevent spam and spoofing attacks.
This renders alleged attackers far less credible,
since anyone can forge valid-looking signatures without breaking into any email account at all.
We call this stronger notion ``universal forgeability,''
and consider it an essential feature of a satisfactory
non-attributability solution for email.}

\item{\bf Incremental deployment.}
Given the vast number of existing email servers and the need for interoperability, we consider the majority of the email ecosystem to be entrenched. For example, it would be difficult to require substantial changes to mail routing, and one cannot expect that every actor immediately switch to some new scheme. We do, however, hold the view that DKIM itself can be replaced incrementally, and that new signing mechanisms represent a less onerous impact on the email ecosystem.

\item {\bf Long-lived keys.}
One natural approach to short-lived credibility of signatures
is to leverage correspondingly short-lived keys and publish each secret key
at the end of its lifetime; this sort of approach to achieving forgeability
has been mentioned in passing outside of the context of email \cite{borisovOTR}.
Similarly, one may leverage easily breakable keys in tandem with rapid key rotation, a solution originally suggested by Jon Callas, one of the original creators of the DKIM standard~\cite{CallasDkimEmail}.

\smallskip
Unfortunately, too-frequent key rotation entails a variety of likely insurmountable practical problems;
rotating keys is a manual process that introduces risk of misconfiguration and key theft, and no administrator wants to be responsible for disabling email for an entire organization. 
Worse, DNS results are often cached, so replacing an individual DNS record with a new one takes time and can yield inconsistent results for different DNS servers on the receiver's end. 
Thus, given the impracticability of very frequent key rotation,
this simple scheme cannot give a satisfactory non-attributability guarantee.

\end{description}

These observations point to a possible solution involving long-lived \emph{public} keys
but short-lived \emph{secret} keys, reminiscent of
forward-secure signatures (FSS) \cite{And97,BM99}.
However, FSS were designed with a rather different goal from ours: namely,
to allow efficient key updating while preventing derivation of \emph{past} keys from present
and future keys.
In contrast, our setting requires that
\emph{present and future} keys cannot be derived from past keys.
The only way to achieve this property using an FFS
would be to precompute a long list of keys and then use the secret keys \emph{in backwards order} ---
this is arguably better than the simplest solution based on short-lived keys,
but having to store the whole list of keys is inefficient and unsatisfactory.

Our solution, \KF, does in fact leverage a signature scheme with 
long-lived public keys and short-lived secret keys:
we define a new primitive called \emph{forward-forgeable signatures} (FFS)
capturing the specific requirements of our application,
and give a construction of FFS from any hierarchical identity-based signature scheme,
as detailed in Section~\ref{sec:ffs}. Our second proposal, \TF, requires no exposure of private keys, but achieves a slightly weaker notion of non-attributability than \KF. \TF is explained in Section~\ref{sec:protocols:tf}.

\subsection{Key Ideas}\label{sec:intro:keyideas}

There are two main ideas underlying the solutions we explore:
\emph{delayed universal forgeability} and \emph{immediate recipient forgeability}.

\paragraph{Delayed universal forgeability.}
  The key idea in this approach is to ensure that signatures with respect to past emails ``expire'' after a time delay $\Delta$ and
  thereafter become forgeable by the general public (i.e., arbitrary outsiders or non-parties).
  This property ensures that no attribution will be credible after the time delay has elapsed.
  We call this property \emph{delayed universal forgeability}.
  As long as $\Delta$ is set larger than the maximum viable time for email latency,
  the signature will still be convincing to the recipient at the time of receipt,
  thus maintaining the spam- and spoofing-resistance of DKIM.

  Signatures that possess delayed universal forgeability retain all of the standard unforgeability properties
  one would expect of a standard signature scheme, until the set time $\Delta$ has passed.
  Thus in cases where an attacker gains access to email and shows it to a third party within $\Delta$ time
  after the email was sent, a third party will be convinced of the email's authenticity. 
  Effectively, delayed universal forgeability protects against adversaries that compromise
  an email account by breaking in and taking a snapshot (``after-the-fact attacks''), but not
  adversaries that fully control an email account and monitor its email in real time (``real-time attacks'').
  After-the-fact attacks cover a broad range of realistic attacks, for example, including many data breaches.
  Next, we discuss how we address real-time attacks.

\paragraph{Immediate recipient forgeability.}
  Suppose that the fact of access to a particular client account
  implies the ability to forge messages from arbitrary other servers \emph{to that recipient only}:
  that is, the ability to obtain valid DKIM signatures 
  on email content and metadata of one's choice.
  We call this \emph{immediate recipient forgeability}.
  Importantly, the recipient constraint ensures the
  inability to impersonate any other server for the purposes of email addressed to \emph{other} recipients,
  thus not interfering with DKIM's spam- and spoofing-resistance.

  This undermines the credibility of any attacker who claims to 
  have ongoing access to a particular email account
  and attempts to convince third parties of the authenticity of emails 
  that she claims were sent to (and from) that account.
  This is also the case for real-time attacks, which may publish
  allegedly-incoming emails immediately after they are received.

\paragraph{Combining both notions.}
  While immediate recipient forgeability has the advantage of no time delay,
  it has the disadvantage that the ability to forge is not universal: rather,
  it is limited to those who have access to the recipient email account.
  In contrast, delayed universal forgeability enables anyone at all to forge.
  Our protocols achieve the ``best of both worlds,'' in order to leverage both their advantages.

  Section~\ref{sec:model} defines our threat model,
  discusses its limitations, and formalizes immediate recipient forgeability
  and delayed universal forgeability 
  with indistinguishability-based definitions (Definitions~\ref{def:NA1}--\ref{def:NA2}).

\subsection{Our Solutions}\label{sec:intro:solns}

This paper fully specifies and evaluates our main proposal, \KF.
We also outline an alternative protocol \TF
that achieves email non-attributability by rather different techniques
assuming the existence of a timestamping authority.

\paragraph{\KF}
Our main proposal is \KF, a protocol to replace DKIM and the associated authentication mechanisms.
The first idea of \KF is to achieve \emph{delayed universal forgeability}
by publicizing signing keys after a set time delay $\Delta$ (say, by posting them on the internet).

We present a new definition, \emph{forward-forgeable signatures (FFS)}, 
to formalize the requirements of a signature scheme equipped with a method to
selectively disclose ``forging information'' for subsets of past signatures.
The public key is required to persist over disclosures:
otherwise, the frequency of manual key updates and DNS updates would render fine-grained disclosures inviable.
We moreover define \emph{succinctness} of FFS, a measure of how efficiently the disclosures can be made.
We construct FFS based on hierarchical identity-based signatures (HIBS)
and use it as an essential building block for \KF.
Our construction achieves logarithmic succinctness.

The second idea of \KF is to achieve \emph{immediate recipient forgeability}
by having email servers accept ``forge requests''
requesting that specific emails be sent to the requester in real time.
The restriction that the recipient must be the requester
ensures that DKIM's spam- and spoofing-protection remains effective.

\paragraph{\TF.}
Our second proposal \TF 
assumes a \emph{publicly verifiable timekeeper (PVTK)} model, in which 
a reliable timekeeper periodically issues timestamps
whose authenticity is publicly verifiable (e.g., because the authority signs each timestamp).
Rather than deploying and scaling a new timekeeping service, we show
that viable timekeeping systems can be constructed from existing Internet 
infrastructure, including OCSP staple responders~\cite{rfc6961}, Certificate Transparency~\cite{rfc6962}, randomness beacons~\cite{nistbeacon}, and proof-of-work blockchains~\cite{nakamoto2012bitcoin,wood2014ethereum}.

In a nutshell, the idea of \TF is to substitute each DKIM signature on a message $m$ at time $t$
with a succinct zero-knowledge proof of the statement $S(m)\vee R(m)\vee T(t+\Delta)$,
where: $S(m)$ denotes knowledge of a valid signature by the sender on $m$,
$R(m)$ denotes knowledge of a valid signature by the receiver on $m$,
and $T(t+\Delta)$ denotes knowledge of a valid timestamp for a time later than $t+\Delta$.
Effectively, the inclusion of $T(t+\Delta)$ ensures \emph{delayed universal
 forgeability},
and the inclusion of $R(m)$ ensures \emph{immediate recipient forgeability}.
The primitive just described can be thought of as a
forward-forgeable signature scheme in the PVTK model.

\subsection{Summary of Contributions}

\begin{itemize}
  \item We provide formal \emph{definitions} of:
    \begin{itemize}
      \item \emph{Email non-attributability} (Section~\ref{sec:sysreq:def})
      \item \emph{Forward-forgeable signatures} (Section~\ref{sec:ffs})
    \end{itemize}
  \item We give \emph{provably secure constructions} of succinct forward-forgeable signatures (FFS)
  in both the standard and PVTK models.
  FFS may be of independent interest beyond the present application.
  \item We present two \emph{system designs} --- \KF, and \TF~--- that achieve email non-attributability
    and could be applied to the existing email system with acceptable practical and processing overhead.
  \item We present a fully \emph{implemented} \KF and provide a comprehensive evaluation of \KF's impact on signing, verification, and bandwidth costs.
\end{itemize}

\section{Background on Email}
\label{sec:bg}

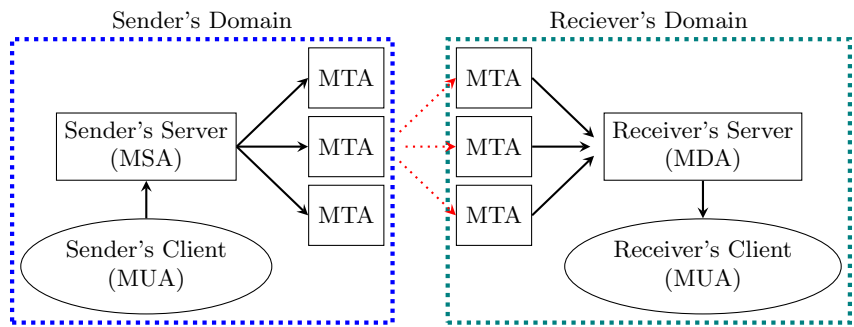
\begin{figure}[!ht]
    \centering
    {\footnotesize
    \begin{tikzpicture}
    \tikzstyle{server} = [rectangle, draw, minimum width=1cm, minimum height=.8cm, align=center]
    \tikzstyle{client} = [ellipse, draw, shape border rotate=90, align=center, aspect=.3, minimum width=1.1cm,style={font=\footnotesize}]
    \tikzset{mytext/.style={sloped,anchor=center,below=3pt, font=\footnotesize}, align=center}
    \tikzset{mytext2/.style={right=3pt, font=\footnotesize}, align=center}
    \tikzset{mytext1/.style={left=3pt, font=\footnotesize}, align=center}
    \tikzset{myarr/.style={>=stealth,thick}}
    \tikzset{myarr2/.style={>=stealth,thick, dotted,draw=red}}
    
    \node[server] (intermeda1) {MTA};
    \node[server] (intermeda2) [below=.1cm of intermeda1] {MTA};
    \node[server] (intermeda3) [below=.1cm of intermeda2] {MTA};

    \node[server] (intermedb1) [right=.95cm of intermeda1] {MTA};
    \node[server] (intermedb2) [below=.1cm of intermedb1] {MTA};
    \node[server] (intermedb3) [below=.1cm of intermedb2] {MTA};

    \node[server] (aserver) [left=.95cm of intermeda2] {Sender's Server\\(MSA)};
    \node[client] (aclient) [below=.5cm of aserver] {Sender's Client\\(MUA)};

    \node[server] (bserver) [right=.95cm of intermedb2] {Receiver's Server\\(MDA)};
    \node[client] (bclient) [below=.5cm of bserver] {Receiver's Client\\(MUA)};

    \node[draw,dotted,ultra thick,draw=blue,fit=(intermeda1) (intermeda3) (aserver) (aclient),label=Sender's Domain] (adomain) {};
    \node[draw,dotted,ultra thick,draw=teal,fit=(intermedb1) (intermedb3) (bserver) (bclient),label=Reciever's Domain] (bdomain) {};

    \draw[->,myarr] (aserver.east) -- (intermeda1.west);
    \draw[->,myarr] (aserver.east) -- (intermeda2.west);
    \draw[->,myarr] (aserver.east) -- (intermeda3.west);

    \draw[->,myarr2,shorten <=8pt] (intermeda2.east) -- (intermedb1.west);
    \draw[->,myarr2,shorten <=8pt] (intermeda2.east) -- (intermedb2.west);
    \draw[->,myarr2,shorten <=8pt] (intermeda2.east) -- (intermedb3.west);

    \draw[->,myarr,shorten >=5pt] (intermedb1.east) -- (bserver.west) ;
    \draw[->,myarr,shorten >=5pt] (intermedb2.east) -- (bserver.west) ;
    \draw[->,myarr,shorten >=5pt] (intermedb3.east) -- (bserver.west) ;

    \draw[->,myarr] (aclient) -- (aserver);
    \draw[->,myarr] (bserver) -- (bclient);

    \end{tikzpicture}
    \caption{Simplified email routing infrastructure}
    \label{fig:MTA}
    }
\end{figure}

RFC 5598 \cite{RFC5598} defines terminology for processing and delivering mail. 
This section introduces basic terminology towards discussing our threat model and system requirements. 

\subsection{Email Routing}

Figure~\ref{fig:MTA} illustrates the architecture of email routing: 
an asynchronous routing protocol built on top of TCP/IP. 

Users first establish a relationship with a trusted email service provider, 
called a Mail Submission Agent (MSA) on the sender side
and a Mail Delivery Agent (MDA) on the receiver side. 
The MDA is responsible for 
various tasks including verifying the 
authenticity of incoming messages. 

The user's email client is generically called a Mail User Agent (MUA). 
Email originates from an MUA, and arrives at the user's trusted MSA. Depending on the system's configuration, the MSA may send the email to other Mail Transfer Agents (MTAs) it trusts. Eventually, an MTA performs a DNS lookup for the receiving domain
to check which MTAs are authorized to process emails for that domain.
The email is then sent to one of these MTAs. 
After a number of hops that depend on the sending and receiving organizations' infrastructure, the email 
arrives at the receiver's MDA, which is responsible for verifying the message for the receiver's MUA.  

\subsection{Email Authentication}
The IETF has developed a number of standards that allow domains to sign and verify incoming and outgoing messages. We provide a 
stylized overview of the four that have seen appreciable adoption: DKIM, SPF, DMARC, and ARC. 

\begin{description}
\item[DKIM.] DomainKeys Identified Mail (DKIM) is an IETF standard for an MSA to sign outgoing email, and an MDA to verify that email by looking up the MSA's public key in the DNS. 
\begin{description}
\setlength{\itemsep}{5pt}
\item[\bf Setup:] The MSA generates a  
key pair and uploads the public key to the DNS.
\item[\bf Send:] The MSA adds the location of its public key to the email's metadata (or \emph{header}), as well as metadata allowing verification of the signature, and then signs the email and its headers.\footnote{This usually includes a hash of the entirety of the message, but the specification does allow for portions of the body of the message to go unsigned but it's unclear how often this is done, and appears to be relatively rare.} 
\item[\bf Verify:] On receipt, the MDA performs a DNS lookup for the sender's public key as specified in the header, and then cryptographically verifies the email. 
\end{description}

\item[SPF] 
The Sender Policy Framework (SPF) ensures that intermediary MTAs are permitted to send and receive messages as a part of the domain. In a way, this solves a somewhat orthogonal problem to DKIM in that SPF provides little guarantee that the message itself has not been modified by an intermediary, but instead provides spoofing protection by limiting what IP addresses are valid accepting MTAs.

\item[DMARC] 
An SPF or DKIM failure as a result of a misconfiguration is indistinguishable from a failure due to an attempted message spoofing, and neither DKIM nor SPF provide mechanisms for alerting the sending domain that there has been a problem. DMARC solves this by adding a DNS TXT record specifying to the receiver what it should do in the case of such failures (such as quarantine, reject, or accept the message despite the failure), as well as providing an email address to send aggregated statistics on such failures. 

\item[ARC]
Authenticated Received Chain (ARC) is an experimental and largely unimplemented IETF standard that aims to address the issues caused by indirect email flows. 

Legitimate modification of messages in transit may occur in a number of circumstances. For example, an MTA that is also a virus scanner may remove malicious attachments, or a mailing list may prepend the mailing list's name to the subject of an email. Unfortunately, any alteration of the body or headers invalidates the original DKIM signature.

ARC acts as an attestation by an intermediary that some subset of DKIM, SPF, DMARC, or previous ARC instances were verifiable before content was modified. To do so, the intermediary adds its own signature to the header of the message, along with metadata about what was verified. 

\end{description}

\subsection{Challenges for Synchronous Authentication Protocols}

Email is inherently asynchronous, and any replacement for DKIM must therefore allow for non-interactive verification of emails. The reason for this is twofold:

\paragraph{Third party MTAs.} Many organizations choose to use third-party MTAs as an initial hop between the organization's own mail server and the internet. The reason one may choose to use such MTAs is varied, but a common use case is to provide some protection from spam, malicious attachments, or DDoS attacks. While these intermediaries are allowed to quarantine messages or provide flow control to the MDA, they are not trusted to refrain from modifying or spoofing emails. By scraping DNS results for the Alexa top 150k, we find that 31,615 have an MX record, and, of these, 22\% (6,793) are using a confirmed multi-hop third-party MTA provider.

\paragraph{Mail forwarding.} Often users will forward email received from an account on one domain to another through mail forwarding. These emails then must be verified by the receiving domain, and there is no way for the receiver to communicate with the original sender.

\section{Model and Security Definitions}\label{sec:model}

This section presents threat models
and formal definitions of email non-attributability.

\paragraph{Notation}
``PPT'' means ``probabilistic polynomial time.''
$|S|$ denotes the size of a set $S$. 
$[n]$ denotes the set $\{1,\dots,n\}$ of positive integers up to $n$,
and $\powset{\cdot}$ denotes powerset.
$\cind$ denotes computational indistinguishability.
$\tau||e$ denotes the result of appending an additional element $e$
to a tuple $\tau$.

\subsection{Model}\label{sec:sysreq:model}

\paragraph{Time}
We model time in discrete time-steps and assume that
everyone has access to a fairly accurate local clock.
(This is a reasonable model because our
system requires synchrony only at a coarse granularity, such
that realistic discrepancies in local clocks will not have any effect.)

\paragraph{Asynchrony} Recognizing the asynchronous nature of email,
$\DE$ denotes an upper bound on the time required for email delivery.
Our parameter settings depend on $\DE$, and our evaluation sets $\DE$ at 15 minutes,
as discussed in Section~\ref{sec:protocols:kf:hibs-details}.

\paragraph{DNS}
Our model assumes all parties and algorithms
have access to DNS and can update their own DNS records.

\paragraph{Publication}
We assume that each party has method of publishing persistent, 
updatable information that is retrievable by all other parties and algorithms.
This could be via DNS or on another medium,
such as posting information on a website.

\subsection{Threat models}
\label{sec:sysreq:tm} \label{sec:model:tm}

The type of attack with which we are concerned is 
disclosure of private communications at the \emph{recipient's} server, whether because
the recipient is compromised or because it is malicious.
We define two threat models, defined below.

Both our proposals \KF and \TF achieve security against Threat Model~\ref{tm:maliciousOnR},
which is the stronger of the two threat models and captures advanced persistent threats.
Threat Model~\ref{tm:honestOnR}
may be a meaningful threat model in a scenario where attackers may illicitly
gain access to an email server but are not likely to be able to maintain 
such access for extended periods of time.
A basic, simpler version of \KF achieves non-attributability for almost all emails
(i.e., all but very recent emails)
under Threat Model~\ref{tm:honestOnR}.

\begin{threatmodel}\emph{(After-the-fact attacks)}\label{tm:honestOnR}
In this model, the recipient is presumed honest at the time of email receipt,
but is later compromised by an attacker that takes a snapshot of all past email content.
\end{threatmodel}

\begin{threatmodel}\emph{(Real-time attacks)}\label{tm:maliciousOnR}
In this model, the receiver is presumed malicious at the time of email receipt,
i.e., presumed to have ongoing access and immediate intent 
to disclose the content of received emails to third parties.
\end{threatmodel}

\paragraph{Client-server trust.}
In the current system, email clients rely heavily on their email servers.
A malicious email server
could easily and undetectably misbehave in many essential functions:
it could drop incoming emails, modify outgoing emails (since typically,
emails are not signed client-side),
or falsify the content and metadata of incoming emails (since typically,
clients do not perform DKIM verification themselves).

Thus, client-server trust is very high in practice.
We treat the client and server as a single entity for the purposes of this paper,
except where specifically otherwise indicated.

\paragraph{No trusted parties.}
In a system where credibility is based on a reliable reputation system,
a trusted (reputable) party with the ability to eavesdrop on the communication channel 
would be able to undermine non-attributability by keeping logs of all traffic it sees.
For example, an MTA in the email system, if it kept logs of all traffic,
could effectively resolve the question of authorship of emails by referring to its internal logs,
in a manner convincing to anyone who simply trusts the MTA based on its reputation.
Our model assumes mutually distrustful parties: i.e., that no party
is taken simply on its word as in the scenario just described.
In other words, credibility in our model is evidence-based and not reputation-based.

\subsection{Defining Non-Attributability}\label{sec:sysreq:def}\label{sec:model:def} 

We define email non-attributability as a game involving
an email protocol $\Email$, adversary $\cA$,
simulator $\cS$, and distinguisher $\cD$.
For any email server $S$ with internal state $s$,
we denote by $\Email_s(S,R,m,\mu,t)$
the information that recipient $R$ receives
when $S$ legitimately sends at time $t$ an email message $m$ with metadata $\mu$ to $R$.
For simplicity, our notation leaves implicit that parties other than the sender
participate in transmission and may affect the information received by $R$ (for example, the MTAs);
however, the final received information $\Email_s(S,R,m,\mu)$ should be thought
to contain any modifications made en route between $S$ and $R$.
While this definition refers to ``internal state $s$'' for generality,
the state $s$ can for our purposes be thought to contain secret key material.

Intuitively, we require indistinguishability between 
a legitimate email $\Email_s(S,R,m,\mu,t)$ and
a ``fake'' email that was created without access to the sending server at all:
that is, without knowing $s$.
To model this, we consider a simulator $\cS$ whose ``goal'' is 
to create an email that looks legitimate without knowing $s$,
and our security definition requires that $\Email_s(S,R,m,\mu,t)$ is indistinguishable 
from the output of $\cS$.

Next, we give two formal definitions of non-attributability.
The first (Definition~\ref{def:NA1}) considers
a simulator that has access to a particular recipient's email server,
and is required to output email from any sender to that recipient.
The second (Definition~\ref{def:NA2}) is a stronger definition
that considers a simulator that
is required to output email from any sender to any recipient
and has access to neither the sender's nor the receiver's email server.

In other words, if an adversary publishes an email allegedly authored by an honest party,
Definition~\ref{def:NA1} guarantees that she can credibly argue that,
granting that the attacker indeed broke into her or her correspondent's account,
his allegation inherently lacks credibility because 
he must have had the ability to forge emails with content of his choice 
(by virtue of having access to one of their email accounts).
Definition~\ref{def:NA2} gives the yet stronger guarantee that \emph{anyone}
can forge past emails: 
so the allegation is even less credible and the email accounts may not even have been compromised.
Note that the two definitions are incomparable: neither is strictly stronger.

\begin{definition}[Recipient non-attributability]\label{def:NA1}
An email protocol $\Email$ is \emph{non-attributable for recipient servers} if
there is a PPT simulator $\cS$ such that for any sender $S$ and recipient $R$,
for any email message $m$ and metadata $\mu$,
$$\Email_s(S,R,m,\mu,t) \cind \cS(R,S,m,\mu) \ ,$$
where $s$ is the internal state of $S$, $t$ is the time at which $\cS$ is invoked, and
$R$ denotes the internal state of the recipient $R$'s email server.

$\Email$ is furthermore \emph{non-attributable for recipients} if
there is a PPT simulator $\cS$ such that for any sender $S$ and recipient $R$,
for any email message $m$ and metadata $\mu$,
$$\Email_s(S,R,m,\mu,t) \cind \cS^{R}(S,m,\mu) \ ,$$
where the superscript $R$ denotes that $\cS$ has the capability of sending outgoing
mail as $R$ through $R$'s email server.
\end{definition}

Observe that \emph{non-attributability for recipients}
is a stronger than (i.e., implies) \emph{non-attributability for recipient servers}.
\KF achieves non-attributability for recipients,
whereas \TF achieves non-attributability for recipient servers.
See Remark~\ref{rmk:RNA} for further discussion of the differences
between the two variants of recipient non-attributability.

\begin{definition}[$\Delta$-universal non-attributability]\label{def:NA2}
For $\Delta\in\NN$, an email protocol $\Email$ is \emph{$\Delta$-strongly non-attributable} if
there is a PPT simulator $\cS$ such that for any sender $S$ 
(with internal state $s$) and recipient $R$,
for any email message $m$, metadata $\mu$, and timestamp $t$,
the following holds at any time $\geq t+\Delta$:
$$\Email_s(S,R,m,\mu,t) \cind \cS(S,R,m,\mu,t) \ .$$
\end{definition}

The two indistinguishability requirements defined above
serve to ensure that no attacker can credibly claim to a third party\footnote{%
E.g., the general public (if the allegedly stolen emails are released publicly) 
or a specific interested party (such as a potential buyer or disseminator of the information).}
that he is providing her with the true content of an email account:
the third party is in the role of distinguisher.

Definition~\ref{def:NA2} is inviable if $\Delta<\DE$.
Otherwise, the spam- and spoofing-resistance provided by DKIM
would be undermined, since any outsider could use the simulator in real time
to send spam email indistinguishable to the recipient from email actually sent by an honest party.
This means that the simulator can only begin to generate an indistinguishable
distribution after $\DE$ delay.
Recall that as time elapses, new information may become available to the simulator,
as discussed in Section~\ref{sec:sysreq:model}.

\paragraph{Relation to the threat models.}
$\Delta$-universal non-attributability (Definition~\ref{def:NA2})
achieves non-attributability
against after-the-fact attacks (Threat Model~\ref{tm:honestOnR})
for all emails sent and received at least $\Delta$ before the server is compromised.

The combination of recipient non-attributability and $\DE$-universal non-attributability
(Definitions~\ref{def:NA1} and \ref{def:NA2})
yields effective non-attributability for all emails against real-time attacks
(Threat Model~\ref{tm:maliciousOnR}).
A real-time attacker with ongoing access to an email server
can easily make the fact of his access evident by immediately publishing
all emails he sees (in particular, within time $\DE$ of receipt),
but will nonetheless be unable to convince a third party of any particular emails
since the fact of his access to the email server allows him to forge emails in real time,
under Definition~\ref{def:NA1}.
With respect to allegedly compromised emails from more than $\DE$ ago,
an attacker's credibility is even more diminished, since for such past timestamps
\emph{anyone with internet access} can generate seemingly validly signed emails,
even without breaking into any email server at all, under Definition~\ref{def:NA2}.

\paragraph{Limitations of our model.}
A practical consequence of recipient non-attributability is that
a recipient $R$'s email server can, \emph{unknown to $R$}, create fraudulent messages that 
appear to be legitimate emails from any sender to $R$, and deliver them to $R$.
As discussed Section~\ref{sec:model:tm},
the level of client-server trust necessitated by the current email system is very high.
In this context, recipient non-attributability does not meaningfully increase 
the trust a client places in her email server.
For example, email servers in the current system could (and often do)
omit DKIM headers when delivering emails to clients:
this effectively implies the ability to deliver fake messages to their clients.

A trivial and uninteresting way to achieve Definitions~\ref{def:NA1} and \ref{def:NA2} is not to sign emails at all.
Of course, this is undesirable as it would undermine the spam- and spoofing-resistance
for which DKIM was designed. 
The concurrent requirement of spam- and spoofing-resistance is implicit
throughout this paper.
It follows that any simulator $\cS$ satisfying Definition~\ref{def:NA1}
must use the recipient $R$'s secret state $r$:  that is, in order to prevent spam,
real-time forgery
must be limited to messages whose recipient is the forger herself.
This suggests that recipient non-attributability would not be meaningful 
in an extreme situation where every single use of $r$ can be monitored and attested to,
since then an attacker could
prove that $\cS$ was never invoked on $r$. 
This might be plausible assuming secure hardware, e.g., by
generating and monitoring all uses of $r$ within a secure enclave (as suggested in~\cite{gunn2018circumventing}) ---
but even then, such an attack would likely only be feasible
by the unlikely attacker who has designed her recipient email server with this unlikely configuration from its very setup.
We consider such attacks to be outside our threat model.

Finally, we note that our definitions
allow for attackers to convince others of the fact that they have access to a particular email account.
Our guarantee is that even so, 
they cannot make credible claims about the content of the emails,
since they gain the ability to falsify emails by the very fact of their access.
That attackers can prove access is unavoidable given that
universal forgeability is not compatible with spam resistance for too small $\Delta$: e.g.,
consider a real-time attacker that obtains a secure timestamp (via an outside timestamping service) for all incoming emails immediately on arrival.
Under Definition~\ref{def:NA1}, this does \emph{not} prove authenticity of the email content,
but just demonstrates access to the recipient's account. In general, we treat such attacks as being outside our threat model.

\section{\mbox{Forward-Forgeable Signatures}}
\label{sec:ffs}

\subsection{Definition}

Definition~\ref{def:FFS} formalizes \emph{forward-forgeable signatures (FFS)}.
They are a new primitive that this paper introduces, and
are an essential building block of our main protocol \KF.
Informally, FFS are signature schemes equipped with a method to
selectively ``expire'' past signatures by releasing \emph{expiry information} that makes them forgeable.
More precisely: in an FFS, each signature is made with respect to a \emph{tag} $\tau$,
which can be an arbitrary string. 
Expiry information can be released with respect to any tag or set of tags.

In our context of wishing to make past signatures forgeable after a time delay,
the tag can be thought to be a timestamp: for example, each email is signed with respect to the current time $\tau$,
and at some later time $\tau+\Delta$, the signer may publish expiry information for $\tau$.

The \emph{correctness} requirement of FFS is the same as that of standard signatures.
The \emph{unforgeability} requirement is modified to include an \emph{expiry oracle}:
that is, unforgeability of signatures w.r.t. non-expired tags must hold
even in the presence arbitrary, adversarially chosen expirations.
The \emph{forgeability on expiry} requirement is a feature of FFS that has no analogue in standard signatures:

\begin{definition}[FFS]\label{def:succinctFDS}\label{def:FFS}
A \emph{forward-forgeable signature scheme (FFS)} $\Sig$
is implicitly parametrized by message space
$\Msg$ and tag space $\Tag$, and consists of five algorithms
$$\Sig=(\KeyGen,\Sign,\Verify,\Expire,\Forge)$$
satisfying the following syntax and requirements.

\smallskip
\noindent\textsc{Syntax:}
\begin{itemize}
  \item $\KeyGen(1^\sec)$ takes as input a security 
  parameter\footnote{Technically, all five algorithms take $1^\sec$ as an input,
  and $\Msg$ and $\Tag$ may be parametrized by $\sec$.
  For brevity, we leave this 
  implicit except in $\KeyGen$.} $1^\sec$ and outputs a key pair $(vk,sk)$.
  \item $\Sign(sk, \tau, m)$ takes as input a signing key $sk$, a tag $\tau\in\Tag$,
  and a message $m\in\Msg$, and outputs a signature $\sigma$.
  \item $\Verify(vk, \tau, m, \sigma)$ takes as input a verification key $vk$, a tag $\tau\in\Tag$,
  a message $m\in\Msg$, and a signature $\sigma$, and outputs a single bit indicating
  whether or not $\sigma$ is a valid signature with respect to $vk$, $m$, and $\tau$.
  \item $\Expire(sk, T)$ takes as input a signing key $sk$ and 
  a tag set $T\subseteq\Tag$, and outputs expiry info $\eta$.
  \item $\Forge(\eta, \tau, m)$ takes as input expiry info $\eta$, 
  a tag $\tau\in\Tag$, and a message $m\in\Msg$, and outputs signature $\sigma$.
\end{itemize}
\noindent\textsc{Required properties:}
\begin{enumerate}
  \item \emph{\underline{Correctness:}} 
    For all $m\in\Msg,\tau\in\Tag$,
    there is a negligible function $\eps$ such that for all $\sec$,
    \begin{equation*}
      \ifsubmission\else\footnotesize\fi
      \Pr\left[
        \begin{array}{lll}
          \begin{array}{l}
          (vk,sk)\gets\KeyGen(1^\sec) \\
          \sigma\gets\Sign(sk,\tau,m) \\
          b\gets\Verify\left(vk,\tau,m,\sigma\right)
          \end{array}
          &:&
          b=1
        \end{array}
      \right] \geq 1-\eps(\sec) \ .
    \end{equation*}
  \item \emph{\underline{Unforgeability:}} For any PPT $\Adv$,
  there is a negligible function $\eps$ such that for all $\sec\in\NN$,
  \begin{align*}
    \ifsubmission\else\footnotesize\setlength\arraycolsep{3pt}\fi
    \Pr\left[
      \begin{array}{lll}
        \begin{array}{l}
        (vk,sk)\gets\KeyGen(1^\sec) \\
        (\tau,m,\sigma)\gets\Adv^{\OSign_{sk},\OExpire_{sk}}(vk) \\
        b\gets\Verify\left(vk,\tau,m,\sigma\right) \\
        b' = \tau\notin Q'_{\OExpire} \wedge (\tau,m)\notin Q_{\OSign}
        \end{array}
        &:&
        \begin{array}{l}
        b=b'=1 \\
        \end{array}
      \end{array}
    \right] \\ \leq\eps(\sec) \ ,
  \end{align*}
  where $\OSign_{sk}$ and $\OExpire_{sk}$ respectively denote oracles
  $\Sign(sk,\cdot,\cdot)$ and $\Expire(sk,\cdot)$,
  $Q_{\OSign}$ and $Q_{\OExpire}$ denote the sets of queries made
  by $\Adv$ to the respective oracles, and $Q'_{\OExpire}=\bigcup_{T\in Q_{\OExpire}}T$.
  \item \emph{\underline{Forgeability on expiry:}}
  For all $m\in\Msg,T\subseteq\Tag$, for any $\tau\in T$, 
  there is a negligible function $\eps$ such that for all $\sec$,
  \begin{equation*}
    \ifsubmission\else\footnotesize\fi
    \Pr\left[
      \begin{array}{lll}
        \begin{array}{l}
        (vk,sk)\gets\KeyGen(1^\sec) \\
        \eta\gets\Expire(sk,T) \\
        \sigma\gets\Forge(\eta,\tau,m) \\
        b\gets\Verify\left(vk,\tau,m,\sigma\right)
        \end{array}
        &:&
        b=1
      \end{array}
    \right] \geq 1-\eps(\sec)\ .
  \end{equation*}
\end{enumerate}
\end{definition}

\paragraph{Generalized expiry policies}
Definition~\ref{def:FFS} allows for more complex \emph{expiry policies} than
simply expiring a sequence of timestamps in order.
For example, it may be desirable to expire certain types of email more quickly than others
(say, email internal to an organization which is likely to be delivered faster,
or email with particularly sensitive content). Alternatively, it may be desirable
to expire email not only according to its timestamp but also based on other metadata 
(say, a recipient whose email server is revealed to have been compromised).
Our FFS definition would support expiry on the basis of any attribute which is recorded in the tag:
for example, tags might be of the form $\tau=(t,\delta)$ 
where $t$ is a timestamp and $\delta$ is the delay until expiration,
or alternatively, $\tau=(t,R)$ where $R$ is the recipient.
Moreover, even when tags have the form $\tau=(t,\cdot)$,
our construction still allows for efficiently expiring based solely on timestamp:
that is, expiring ``all tags that begin with $t$.''

\paragraph{FFS in the publicly verifiable timekeeper model}
Recall the \emph{publicly verifiable timekeeper} (PVTK) model, in which 
a reliable timekeeper periodically issues timestamps
whose authenticity is publicly verifiable (e.g., because the authority signs each timestamp).
We show that in this model, expiration of signatures may occur ``automatically'' over time:
that is, it is not necessary for the signer to publish any additional expiry information
in order for the signature to become forgeable after a delay.
In this case, the algorithm $\Expire$ is unnecessary, and $\Forge$ need not take $\eta$ as input.
Our construction of FFS in the PVTK model is given in Section~\ref{sec:timestamping}.

\smallskip

\ifsubmission
  FFS are somewhat reminiscent of forward-\emph{secure} signatures;
  a discussion of their differences
  is in Appendix~\ref{appx:fss}.
\else
  \paragraph{Difference with forward-secure signatures}
Both FFS and FSS yield a system of short-lived secret keys 
all corresponding to one long-lived public key.
However, the definitions differ in two main ways, described below
and depicted in Figure~\ref{fig:fss}.
\begin{enumerate}
  \item Forward-\emph{secure} signatures require that past keys cannot be computed
  from future keys, whereas forward-\emph{forgeable} signatures
  require that future keys cannot be computed from past keys.
  \item Forward-\emph{secure} signatures are specifically designed to protect against compromise of all secret key material,
  so the each key must be derivable based solely on the previous short-lived key.
  Forward-\emph{forgeable} signatures, in contrast, may have persistent ``master secret key''
  material that is used to generate each short-lived key.
\end{enumerate}

\begin{figure}[ht!]
  \centering
      {\footnotesize
      \begin{tikzpicture}
      \tikzstyle{key} = [circle, draw, align=center]
      \tikzstyle{msk} = [rectangle, draw, align=center, minimum height=0.5cm]
      \tikzset{myarr/.style={>=stealth,thick}}
      \tikzset{myarr2/.style={>=stealth,thick,dotted}}
      \tikzset{myarr3/.style={>=stealth,thick,dotted,decoration={markings,mark=at position 0.5 with {
        \draw [red,thick,-,solid]
          ++ (-\StrikeThruDistance,-\StrikeThruDistance)
          -- ( \StrikeThruDistance, \StrikeThruDistance);}
      },
      postaction={decorate},
      }}

      \node[key] (key2) {$sk_1$};
      \node[key] (key3) [right=0.4cm of key2] {$sk_2$};
      \node[key] (key4) [right=0.4cm of key3] {$sk_3$};
      \node[key] (key5) [right=0.4cm of key4] {$sk_4$};
      \node (dots) [right=0.2cm of key5] {$\dots$};
      \node (fss) [left=1.2cm of key2] {Forward-secure:};

      \draw[->,myarr] (key2) to [out=20,in=160] (key3);
      \draw[->,myarr] (key3) to [out=20,in=160] (key4);
      \draw[->,myarr] (key4) to [out=20,in=160] (key5);

      \draw[->,myarr3] (key3) to [out=200,in=340] (key2);
      \draw[->,myarr3] (key4) to [out=200,in=340] (key3);
      \draw[->,myarr3] (key5) to [out=200,in=340] (key4);

      \node[key] (kkey2) [below=0.3cm of key2] {$sk_1$};
      \node[key] (kkey3) [below=0.3cm of key3] {$sk_2$};
      \node[key] (kkey4) [below=0.3cm of key4] {$sk_3$};
      \node[key] (kkey5) [below=0.3cm of key5] {$sk_4$};
      \node[msk] (kkey1) [left=0.4cm of kkey2] {$msk$};
      \node (ddots) [right=0.2cm of kkey5] {$\dots$};
      \node (ffs) [left=1.2cm of kkey2] {Forward-forgeable:};

      \draw[->,myarr3] (kkey2) -- (kkey3);
      \draw[->,myarr3] (kkey3) -- (kkey4);
      \draw[->,myarr3] (kkey4) -- (kkey5);

      \draw[->,myarr] (kkey1) to [out=300,in=220] (kkey2);
      \draw[->,myarr] (kkey1) to [out=300,in=220] (kkey3);
      \draw[->,myarr] (kkey1) to [out=300,in=220] (kkey4);
      \draw[->,myarr] (kkey1) to [out=300,in=220] (kkey5);
      
      \end{tikzpicture}
      \caption{Forward-secure vs. forward-forgeable signatures}\label{fig:fss}
      }
\end{figure}
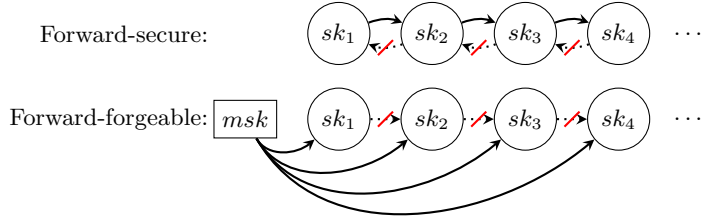
\fi

\subsection{Succinctness}

Next, we define \emph{succinctness} of FFS, a measure of the efficiency of disclosure
in terms of the size of expiry info per tag expired.
\KF uses a construction of FFS based on hierarchical identity-based signatures
(Section~\ref{sec:ffs:construction}), which achieves logarithmic succinctness.

\begin{definition}\label{def:succ}
  Let $z:\NN\to\NN$. Let $S\subset\powset{\Tag}$ be
  a set of sets of tags.
  A forward-forgeable signature scheme
  $\Sig$ is \emph{$(S,z)$-succinct} if
  for any $T\in S$,
  there is a negligible function $\eps$ such that for all $\sec$,
  \begin{align*}
  \Pr_{(vk,sk)\gets\KeyGen(1^\sec)}\Big[
    \big|\Expire(sk,T)\big|\leq z(|T|)
  \Big] \geq 1-\eps(\sec)\ .
  \end{align*}
\end{definition}

\ifsubmission\else
  \begin{remark}
  Definition~\ref{def:succ} is a worst-case definition: it guarantees
  the size of expiry information with overwhelming probability. 
  In certain applications, an average-case definition may be appropriate instead,
  i.e., defining succinctness by bounding the size on \emph{expectation}.
  We use a worst-case definition since it is stronger than an average-case definition,
  and our construction achieves it.
  \end{remark}
\fi

\subsection{FFS Construction from (Hierachical) IBS}\label{sec:ffs:construction}

We first outline a simple FFS construction $\SFFS$ based on identity-based signatures (IBS) \cite{Sha84},
as a stepping stone to our main construction from hierarchical IBS (HIBS).
The next paragraph assumes familiarity with standard IBS terminology;
readers unfamiliar with IBS may skip to the main construction which is explained
formally with explicit syntax definitions.

Let tags in the FFS correspond to identities in the IBS.
$\SFFS.\KeyGen$ outputs IBS master keys. 
The $\SFFS$ signing and verification algorithms for tag $\tau$
respectively invoke the IBS signing and verification algorithms for identity $\tau$.
$\SFFS.\Expire$ outputs the secret key for each input tag $\tau\in T$,
and $\SFFS.\Forge$ uses the appropriate secret key from the expiry information
to invoke the IBS signing algorithm.
The simple solution has linear succinctness. 
By leveraging hierarchical IBS (HIBS), our main construction achieves logarithmic succinctness,
as described next.

\begin{definition}\label{def:HIBS}
A \emph{hierarchical identity-based signature scheme} $\HIBS$ is implicitly
parametrized by message space $\Msg$ and identity space $\Id=\{\Id_\ell\}_{\ell\in\NN}$, and
consists of four algorithms
$$\HIBS=(\Setup,\KeyGen,\Sign,\Verify)$$
with the following syntax.
\begin{itemize}
  \item $\Setup(1^\sec)$ takes as input a security 
  parameter\footnote{Technically, all four algorithms take $1^\sec$ as an input,
  and $\Msg$ and $\Id$ may be parametrized by $\sec$.
  For brevity, we leave this 
  implicit except in $\Setup$.} and outputs a master key pair $(mvk,msk)$.
  \item $\KeyGen(sk_{\vec{id}},id)$ takes as input a secret key $sk_{\vec{id}}$
  for a tuple of identities $\vec{id}=(id_1,\dots,id_\ell)\in\Id_1\times\dots\times\Id_\ell$
  and an additional identity $id\in\Id_{\ell+1}$ 
  and outputs a signing key $sk_{\vec{id'}}$ where $\vec{id'}=(id_1,\dots,id_\ell,id)$.
  The tuple may be empty (i.e., $\ell=0$): in this case, $sk_{()}=msk$.
  \item $\Sign(sk_{\vec{id}},m)$ takes as input a signing key $sk_{\vec{id}}$
  and a message $m\in\Msg$, and outputs a signature $\sigma$.
  \item $\Verify(mvk, \vec{id}, m, \sigma)$ takes as input master verification key $mvk$, 
  tuple of identities $\vec{id}$,
  message $m\in\Msg$, and signature $\sigma$, and outputs a single bit indicating
  whether or not $\sigma$ is a valid signature with respect to $mvk$, $\vec{id}$, and $m$.
\end{itemize}
A \emph{depth-$L$ HIBS} is a HIBS where the maximum length of identity tuples is $L$,
i.e., the identity space is $\Id=\{\Id_\ell\}_{\ell\in[L]}$.
\end{definition}

\begin{definition}
  For an identity space $\Id=\{\Id_\ell\}_{\ell\in\NN}$,
  we say $\vec{id}$ is a \emph{level-$\ell$ identity} if $\vec{id}\in\Id_1\times\dots\times\Id_\ell$.
  For any $\ell'>\ell$, let
  $\vec{id}$ is a level-$\ell$ identity and $\vec{id'}$ is a level-$\ell'$ identity.
  We say that $\vec{id'}$ is a \emph{sub-identity} of $\vec{id}$ if $\vec{id}$ is a prefix of $\vec{id'}$.
  If moreover $\ell'=\ell+1$, we say $\vec{id'}$ is a \emph{immediate sub-identity} of $\vec{id}$.
\end{definition}

\paragraph{Deriving subkeys}
Observe that given a master secret key of a HIBS,
it is possible to derive secret keys corresponding to level-$\ell$ identities for any $\ell$,
by running $\KeyGen$ $\ell$ times.
By a similar procedure, given any secret key corresponding to a level-$\ell$ identity $\vec{id}$,
it is possible to derive any ``subkeys'' thereof, i.e., 
secret keys for sub-identities of $\vec{id}$.
For our construction, it is useful to name this procedure and refer to it directly:
we define $\HIBS.\KeyGenFromRoot$ in Algorithm~\ref{alg:KeyGenFromRoot}.
We write the randomness $\rho_1,\dots,\rho_\ell$ of $\HIBS.\KeyGenFromRoot$ explicitly.

\begin{algorithm}\caption{$\HIBS.\KeyGenFromRoot$}\label{alg:KeyGenFromRoot}
\begin{algorithmic}
  \Input $sk,\ell,\vec{id}=(id_1,\dots,id_{\ell'})$ \Comment{\textit{Require:} $\ell\leq\ell'$}
  \Rand $\rho_1,\dots,\rho_{\ell'}$
  \For{$j=\ell+1,\dots,\ell'$}
    \State $sk\gets\HIBS.\KeyGen(sk,id_j;\rho_j)$
  \EndFor
  \State \textbf{return} $sk$
\end{algorithmic}
\end{algorithm}

\paragraph{HIBS security requirements}
Due to space constraints, Definition~\ref{def:HIBS} gives only the syntax and not the security guarantees
of a HIBS. Informally, an HIBS must satisfy the following security guarantees.
For a formal security definition, see, e.g., \cite{GS02}.
\begin{itemize}
  \item \emph{Correctness:}
  For any identity tuple $\vec{id}$, an honestly produced signature w.r.t $\vec{id}$ must verify as valid w.r.t. $\vec{id}$.
  \item \emph{Unforgeability:}
  For any PPT adversary $\cA$ that has access to a $\KeyGen(msk,\cdot,\cdot)$ oracle,
  the probability that $\cA$ outputs a valid signature w.r.t. an identity $\vec{id}\notin Q$
  must be negligible, where $Q$ is the set of all sub-identities of identities $\cA$ has queried to the oracle.
\end{itemize}

\paragraph{Succinctly representing expiry information}
\ifsubmission\else \begin{algorithm}\caption{$\Compress$}\label{alg:MinCover}
\begin{algorithmic}
  \Input $\Id=\{\Id_\ell\}_{\ell\in[L]},T\subseteq\Id_1\times\dots\times\Id_\ell$
  \State \textit{// Remove redundant sub-identities}
  \ForAll{$\tau\in T$}
    \If{$\exists\tau'\in T$ s.t. $\tau'$ is a prefix of $\tau$}
      \State $T = T\setminus\{\tau\}$
    \EndIf
  \EndFor
  \State \textit{// Replace identities with prefix identities where possible}
  \For{$\ell=L-1,\dots,1$}
    \ForAll{$\vec{\tau}=(\tau_1,\dots,\tau_\ell)\in\Id_1\times\dots\times\Id_\ell$}
      \State \textit{// $X$ represents all level-$(\ell+1)$ sub-identities of $\vec{\tau}$}
      \State $X=\{(\tau_1,\dots,\tau_\ell,\tau')\}_{\tau'\in\Id_{\ell+1}}$
      \If{$X\subseteq T$}
        \State $T = T\setminus X$
        \State $T = T\cup\{(\tau_1,\dots,\tau_\ell)\}$
      \EndIf
    \EndFor
  \EndFor
  \State \textbf{return} $sk$
\end{algorithmic}
\end{algorithm} \fi
Given any set $T$ of tuples of identities, the simplest way to make signatures with respect to $T$ forgeable
would be release the secret key corresponding to each $\vec{id}\in T$, much as in $\SFFS$:
\begin{equation}\label{eqn:naive-rep}
  \eta=\left\{sk_{\vec{id}} = \HIBS.\KeyGenFromRoot(msk,0,\vec{id})\right\}_{\vec{id}\in T}\ .
\end{equation}
However, leveraging the hierarchical nature of HIBS,
$\eta$ can often be represented more succinctly than \eqref{eqn:naive-rep}.
Based on the fact that Algorithm~\ref{alg:KeyGenFromRoot} allows the derivation of any subkey,
we make two optimizations.
First, before computing \eqref{eqn:naive-rep},
we can delete from $T$ any $\vec{id}\in T$ that is a sub-identity of some $\vec{id'}\in T$.
Secondly, if there is any $\vec{id'}=(id_1,\dots,id_\ell)\in\Id_1\times\dots\times\Id_\ell$ such that
every immediate subkey of $\vec{id'}$ is in $T$, i.e.,
$$\forall id_{\ell+1}\in\Id_{\ell+1}, ~ (id_1,\dots,id_\ell,id_{\ell+1})\in T\ ,$$
then all sub-identities of $id'$ can be removed from $T$ and replaced by $id'$ before computing \eqref{eqn:naive-rep}.
Such replacement is permissible only when \emph{every} possible subkey of $id'$ is derivable from $T$:
otherwise, adding $id'$ to $T$ would implicate additional subkeys
beyond those originally in $T$. 

These two optimizations lead us to an algorithm $\Compress$,
which takes as input the identity space of a HIBS and a set of identity tuples $T$,
and outputs a (weakly) smaller set of identity tuples $T'$
such that knowledge of the secret keys corresponding to $T'$ suffices to produce
valid signatures with respect to exactly the identity tuples in $T$.
Next, we describe the operation of $\Compress$ using a tree-based
representation of identity tuples.
\ifsubmission
  A fully formal description of $\Compress$ is given in Algorithm~\ref{alg:MinCover}
  in Appendix~\ref{appx:compress} (due to space limits).
\fi

\paragraph{Tree representation}
It is convenient to think of identity tuples represented graphically in a tree.
A node at depth $\ell$ represents a tuple of $\ell$ identities (considering the root node to be at depth $0$).
The set of all depth-$\ell$ nodes corresponds to the set of all $\ell$-tuples of identities.
The branching factor at level $\ell$ is $|\Id_{\ell+1}|$.
Given a secret key for a particular node (i.e., identity tuple),
the secret keys of all its descendant nodes are easily computable using $\HIBS.\KeyGenFromRoot$.
(The secret key for the root node is the master secret key.)
In this language, $\Compress$ simply takes a set $T$ of nodes
and returns the smallest set $T'$ of nodes such that (1) all nodes in $T$ are descendants of some node in $T'$
and (2) no node not in $T$ is a descendant of any node in $T'$.
\ifsubmission\else Figure~\ref{fig:SuccExp} gives an illustration of the $\Compress$ algorithm
  on a small example tree.

  \begin{figure}[!ht]
    \centering
    \ifsubmission
      \begin{tikzpicture}[level/.style={level distance=5mm,sibling distance=16mm/#1}]
    \else
      \begin{tikzpicture}[level/.style={level distance=5mm,sibling distance=16mm/#1}, scale=1.5]
    \fi
      \node [circle,draw,scale=0.6]  {}
        child {node [circle,draw,scale=0.6]  {}
          child {node  [circle,draw,scale=0.6] {}
                      child {node [circle,draw,fill=black,scale=0.6]  {}}
                      child {node [circle,draw,fill=black,scale=0.6]  {}}
                  }
          child {node  [circle,draw,scale=0.6] {}
                      child {node [circle,draw,fill=black,scale=0.6]  {}}
                      child {node [circle,draw,fill=black,scale=0.6]  {}}
                  }
        }
        child {node [circle,draw,scale=0.6]  {}
          child {node  [circle,draw,scale=0.6] {}
                      child {node [circle,draw,fill=black,scale=0.6]  {}}
                      child {node [circle,draw,scale=0.6]  {}}
                  }
          child {node  [circle,draw,scale=0.6] {}
                      child {node [circle,draw,scale=0.6]  {}}
                      child {node [circle,draw,scale=0.6]  {}}
                  }
          };
    \end{tikzpicture}
    \ifsubmission{~}\else{\qquad}\fi$\xmapsto{\Compress}$\ifsubmission{~}\else{\qquad}\fi
    \ifsubmission
      \begin{tikzpicture}[level/.style={level distance=5mm,sibling distance=16mm/#1}]
    \else
      \begin{tikzpicture}[level/.style={level distance=5mm,sibling distance=16mm/#1}, scale=1.5]
    \fi
      \node [circle,draw,scale=0.6]  {}
        child {node [circle,draw,fill=black,scale=0.6]  {}
          child {node  [circle,draw,scale=0.6] {}
                      child {node [circle,draw,scale=0.6]  {}}
                      child {node [circle,draw,scale=0.6]  {}}
                  }
          child {node  [circle,draw,scale=0.6] {}
                      child {node [circle,draw,scale=0.6]  {}}
                      child {node [circle,draw,scale=0.6]  {}}
                  }
        }
        child {node [circle,draw,scale=0.6]  {}
          child {node  [circle,draw,scale=0.6] {}
                      child {node [circle,draw,fill=black,scale=0.6]  {}}
                      child {node [circle,draw,scale=0.6]  {}}
                  }
          child {node  [circle,draw,scale=0.6] {}
                      child {node [circle,draw,scale=0.6]  {}}
                      child {node [circle,draw,scale=0.6]  {}}
                  }
          };
    \end{tikzpicture}
    \caption{Example application of $\Compress$}
    \label{fig:SuccExp}
  \end{figure}

  \bigskip
\fi

\smallskip

Our construction of FFS based on HIBS follows.
\ifsubmission
  Theorem~\ref{thm:ffs} states security and Lemma~\ref{thm:succ} states succinctness.
  The proofs are in Appendix~\ref{appx:ffs} due to space.
\else
  It makes use of Algorithms~\ref{alg:KeyGenFromRoot} and \ref{alg:MinCover},
  which were just defined.
\fi

\begin{construction}\label{constr:full}
Let $\HIBS$ be a depth-$L$ HIBS\footnote{The depth need not be finite,
but we consider finite $L$ for simplicity.} 
with message space $\Msg$ and identity space $\Id=\{\Id_\ell\}_{\ell\in[L]}$.
Let $\RO$ be a random oracle,\footnote{%
The construction is presented in the random oracle model for simplicity,
but does not \emph{require} a random oracle:
the random oracle can be replaced straightforwardly by a pseudorandom function (PRF)
where the PRF key is made part of the HIBS secret key.
} and for any tuple $\vec{\tau}=(\tau_1,\dots,\tau_\ell)$, 
let $\vec{\RO}(\vec{\tau})=(\RO(\tau_1),\dots,\RO(\tau_\ell))$.
For $\ell\in[L]$, define $\Tag_\ell=\Id_1\times\dots\times\Id_\ell$.
We construct a FFS
$\Sig$ with message space $\Msg$ and tag space $\Tag=\bigcup_{\ell\in[L]}\Tag_\ell$, as follows.
\begin{itemize}
  \item $\Sig.\KeyGen(1^\sec)$: output $(vk,sk)\gets\HIBS.\Setup(1^\sec)$.
  \item $\Sig.\Sign(sk,\vec{\tau}=(\tau_1,\dots,\tau_\ell),m)$: let 
  $$sk_{\vec{\tau}}=\HIBS.\KeyGenFromRoot(sk,0,\vec{\tau};\vec{\RO}(\vec{\tau}))$$ 
  and output $\sigma\gets\HIBS.\Sign(sk_{\vec{\tau}},m)$.
  \item $\Sig.\Verify(vk,\vec{\tau},m,\sigma)$: \mbox{output $b\gets\HIBS.\Verify(vk,\vec{\tau},m,\sigma)$.}
  \item $\Sig.\Expire(sk,T)$: let $T'=\Compress(\Id,T)$; output
    $$\eta=\left\{(\vec{\tau},sk_{\vec{\tau}}):sk_{\vec{\tau}}=\HIBS.\KeyGenFromRoot(sk,0,\vec{\tau};\vec{\RO}(\vec{\tau}))\right\}_{\tau\in T'}\ .$$
  \item $\Sig.\Forge(\eta,\tau,m)$: if there exists $sk_{\tau'}$ such that
    $(\tau',sk_{\tau'})\in\eta$ and $\tau'$ is a prefix of $\tau$, let $\ell$ be the length of $\tau'$, let
    $$sk_{\vec{\tau}}=\HIBS.\KeyGenFromRoot(sk_{\tau'},\ell,\vec{\tau};\vec{\RO}(\vec{\tau}))$$ 
    and output $\sigma\gets\HIBS.\Sign(sk_{\vec{\tau}},m)$; otherwise, output $\bot$.
\end{itemize}
\end{construction}

\begin{theorem}\label{thm:ffs}
If $\HIBS$ is a secure HIBS,
Construction~\ref{constr:full} instantiated with $\HIBS$ is a 
\ifsubmission{FFS}\else{forward-forgeable signature scheme}\fi.
\end{theorem}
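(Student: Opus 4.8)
The plan is to verify the three FFS properties from Definition~\ref{def:FFS} one at a time, each by reducing to the corresponding HIBS property. Throughout I would exploit the two structural facts already established: that $\HIBS.\KeyGenFromRoot$ lets one derive any subkey of a held key, and that $\Compress(\Id,T)$ returns a set $T'$ whose keys generate valid signatures for \emph{exactly} the tuples in $T$ (up to the prefix-closure of $T$, which is the relevant notion since expiring a tag means making it forgeable).

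\textbf{Correctness.} For a tag $\vec\tau=(\tau_1,\dots,\tau_\ell)$ and message $m$, $\Sig.\Sign$ computes $sk_{\vec\tau}=\HIBS.\KeyGenFromRoot(sk,0,\vec\tau;\vec\RO(\vec\tau))$, which by the definition of $\KeyGenFromRoot$ is a legitimately-generated HIBS signing key for the level-$\ell$ identity $\vec\tau$ (the random oracle merely supplies the coins $\rho_1,\dots,\rho_\ell$, which is a valid distribution on randomness). Then $\sigma\gets\HIBS.\Sign(sk_{\vec\tau},m)$ verifies under $\HIBS.\Verify(vk,\vec\tau,m,\sigma)=\Sig.\Verify(vk,\vec\tau,m,\sigma)$ except with the negligible HIBS-correctness error.

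\textbf{Forgeability on expiry.} Fix $T\subseteq\Tag$, $\tau\in T$, and $m$. Let $T'=\Compress(\Id,T)$. I would first argue that $\Compress$ is sound: every $\vec\tau\in T$ has some prefix $\tau'\in T'$ (this is immediate from the tree characterization of $\Compress$ — it only deletes a node when an ancestor remains, or when replacing a full set of siblings by their parent). Hence in $\Sig.\Forge(\eta,\tau,m)$ the required $(\tau',sk_{\tau'})\in\eta$ exists, and $sk_{\tau'}$ equals the honestly-generated key for $\tau'$ (same $\KeyGenFromRoot$-with-$\vec\RO$ computation as in $\Sign$). Running $\KeyGenFromRoot(sk_{\tau'},\ell',\vec\tau;\vec\RO(\vec\tau))$ re-derives the \emph{same} key $sk_{\vec\tau}$ that $\Sign$ would have produced — here I use that $\KeyGenFromRoot$ extends a prefix key deterministically given its coins, and those coins are $\RO(\tau_{\ell'+1}),\dots,\RO(\tau_\ell)$ in both cases. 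So $\Forge$ outputs a signature from exactly the honest signing distribution, and correctness applies.

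\textbf{Unforgeability.} This is the main obstacle and the only part requiring a real reduction. Given a PPT adversary $\Adv$ against FFS-unforgeability with oracles $\OSign_{sk}$ and $\OExpire_{sk}$, I build a HIBS forger $\cB$ with a $\HIBS.\KeyGen(msk,\cdot)$ oracle. $\cB$ simulates the random oracle $\RO$ by lazy sampling. To answer a sign query $(\vec\tau,m)$: $\cB$ needs $sk_{\vec\tau}$, which it obtains by querying its HIBS key-generation oracle along the path $\vec\RO(\vec\tau)=(\RO(\tau_1),\dots,\RO(\tau_\ell))$ — wait, the subtlety is that $\cB$'s HIBS oracle uses \emph{its own} internal randomness for $\KeyGen$, not $\vec\RO(\vec\tau)$; so instead $\cB$ should define $\RO$ values \emph{consistently} with whatever keys the HIBS oracle hands back, i.e., program $\RO$ lazily so that the HIBS identity string for tag component $\tau_i$ is $\RO(\tau_i)$ and the associated derivation randomness is supplied by the HIBS challenger. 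Concretely: treat each distinct string $\tau_i$ appearing in a tag as mapping (via $\RO$) to a fresh HIBS identity, and serve both $\OSign$ and $\OExpire$ using keys/signatures from the HIBS oracle for the corresponding identity tuples; for $\OExpire(T)$, $\cB$ runs $\Compress(\Id,T)$ and requests the HIBS keys for the (images under $\vec\RO$ of the) tuples in $T'$. When $\Adv$ outputs a forgery $(\tau,m,\sigma)$ with $\tau\notin Q'_{\OExpire}$ and $(\tau,m)\notin Q_{\OSign}$, $\cB$ outputs $(\vec\RO(\tau),m,\sigma)$. The key invariant to check is that $\vec\RO(\tau)$ is \emph{not} a sub-identity of any identity $\cB$ queried to its HIBS oracle: queries arising from $\OSign$ are for exact tuples $\ne(\tau,\cdot)$-with-same-$m$... here one must be careful — $\OSign$ may have been queried on $(\tau,m')$ for $m'\ne m$, which forces $\cB$ to have obtained $sk_{\vec\RO(\tau)}$, making $\vec\RO(\tau)$ a queried identity and breaking the reduction. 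The fix is the standard one: have $\cB$ guess (or use the HIBS \emph{selective} security / a hybrid over polynomially many identities) so that for the target tag it answers $\OSign(\tau,\cdot)$ queries \emph{without} asking for $sk_{\vec\RO(\tau)}$ — impossible with only a key oracle, so I would instead assume HIBS unforgeability in its standard adaptive form where the adversary has a key-\emph{extraction} oracle \emph{and} a signing oracle, as in \cite{GS02}; then $\cB$ relays $\OSign(\tau,m')$ directly to the HIBS signing oracle and relays $\OExpire$ to the extraction oracle, and the forgery identity $\vec\RO(\tau)$ is fresh exactly because $\tau\notin Q'_{\OExpire}$ (so never extracted, nor a sub-identity of an extracted tuple, using soundness of $\Compress$: if $\tau$ were a sub-identity of some $\tau'\in\Compress(\Id,T)$ then $\tau\in T\subseteq Q'_{\OExpire}$) and $(\tau,m)$ was never signed. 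Collisions in the lazily-sampled $\RO$ (two distinct tag strings mapping to the same HIBS identity) occur with negligible probability and are absorbed into $\eps$. This yields a HIBS forgery with the same non-negligible probability, contradicting HIBS security.

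\textbf{Succinctness} (Lemma, stated separately) would follow by bounding $|\Compress(\Id,T)|$ for structured $T$ — e.g. intervals of timestamps — by $O(L)=O(\log|\Tag|)$ nodes via the standard interval-cover argument on the tree, times the per-key size; but the theorem statement above asks only that the construction \emph{is} an FFS, so the three properties above suffice.
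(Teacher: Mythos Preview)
Your approach is the paper's: derive each of the three FFS requirements from the corresponding HIBS guarantee. The paper's own proof is a three-sentence sketch---correctness and unforgeability ``follow directly'' from the HIBS, and forgeability-on-expiry holds because $\Forge$ recomputes the same $sk_{\vec\tau}$ via $\KeyGenFromRoot$ and then invokes $\HIBS.\Sign$, so HIBS correctness applies.

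Your treatment of unforgeability is considerably more careful than the paper's, and you correctly surface a subtlety the paper's text does not address: an $\OSign(\tau,m')$ query with $m'\neq m$ would force a reduction armed only with a HIBS key-extraction oracle to extract $sk_{\vec\tau}$, spoiling the eventual forgery on $\vec\tau$. Your fix---work with the standard HIBS game that has both an extraction oracle \emph{and} a signing oracle, as in \cite{GS02}, which is precisely the reference the paper defers to for the formal HIBS definition---is the right one. Your random-oracle worry is over-engineered: per the construction's footnote, $\RO$ may be replaced by a PRF keyed under $sk$, so the adversary never evaluates it directly; the reduction can simply cache HIBS-challenger keys per identity tuple and absorb the switch from PRF-derived to challenger-supplied randomness into a single PRF-security hybrid.
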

\ifsubmission\else \begin{proof}
Correctness and unforgeability of Construction~\ref{constr:full} follow directly from
correctness and unforgeability of the underlying HIBS.
The FFS requirement of \emph{forgeability on expiry} moreover follows
from the correctness requirement of the HIBS:
the $\Forge$ algorithm of Construction~\ref{constr:full}
invokes $\HIBS.\Sign$ using a secret key $sk_{\vec{\tau}}$ which
is guaranteed (by construction of $\HIBS.\KeyGenFromRoot$) 
to be the secret key corresponding to identity tuple $\tau$.
The validity of $\HIBS.\Sign$ invoked on a valid secret key corresponding to identity tuple $\tau$
is guaranteed by the correctness of the HIBS.
\end{proof} \fi

Construction~\ref{constr:full} achieves logarithmic succinctness,
as stated next. 

\ifsubmission
  A proof of succinctness is in the full version.
\fi

\begin{lemma}[Logarithmic succinctness]\label{thm:succ}
Let $\HIBS$ be a depth-$L$ HIBS with message space $\Msg$ and identity space $\Id=\{\Id_\ell\}_{\ell\in[L]}$.
For each $\ell\in L$, let $\preceq_\ell$ be a total order on $\Id_\ell$.
Let $\Tag_L=\Id_1\times\dots\times\Id_L$.
For $i\in|\Tag_L|$, let $\tau_i$ denote the $i$th element of $\Tag_L$ in
the lexicographic order induced by $\{\preceq_\ell\}_{\ell\in L}$.
Construction~\ref{constr:full} instantiated with $\HIBS$ is
$(S,2z)$-succinct and also $(S_1,z)$-succinct, where
$$S=\big\{\{\tau_i\}_{j'\leq i\leq j}:j,j'\in[|\Tag_L|]\big\} \quad\mbox{and}\quad 
S_1=\big\{\{\tau_i\}_{1\leq i\leq j}:j\in[|\Tag_L|]\big\}\ .$$
and $z(\cdot)=B\cdot\log_B(\cdot)$ where $B=\max_{\ell\in L}\{|\Id_\ell|\}$.
We assume $B$ is constant.
\end{lemma}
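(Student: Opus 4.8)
The plan is to reduce the probabilistic succinctness claim to a purely combinatorial bound on $|\Compress(\Id,T)|$, and then read that bound off the ``canonical interval decomposition'' that $\Compress$ produces. First I would note that $\Sig.\Expire(sk,T)$ emits exactly one pair $(\vec\tau,sk_{\vec\tau})$ per element of $T'=\Compress(\Id,T)$, so $|\Expire(sk,T)|=|T'|$; and that $T'$ depends only on $\Id$ and $T$, not on $sk$ or the random oracle. Hence the probability in Definition~\ref{def:succ} is identically $1$, and it remains to show the deterministic bounds $|\Compress(\Id,T)|\le 2z(|T|)$ for every $T\in S$ and $|\Compress(\Id,T)|\le z(|T|)$ for every $T\in S_1$.

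Next I would pin down what $\Compress$ computes. View $\Id_1\times\dots\times\Id_L$ as the leaves of a depth-$L$ rooted tree in which each depth-$\ell$ node has $|\Id_{\ell+1}|$ children ordered by $\preceq_{\ell+1}$, so that left-to-right leaf order is exactly the lexicographic order defining the $\tau_i$. Since all tags in $T$ are level-$L$ (leaves), the first phase of $\Compress$ (deleting tuples dominated by an ancestor) does nothing, and the second phase just coalesces, bottom-up, every complete sibling-set into its parent. A short induction shows $T'$ is then the unique minimal antichain of tree nodes whose descendant-leaf set is exactly $T$: no node of $T'$ can be promoted to its parent (else the second phase would have done so) and none can be dropped (else some leaf of $T$ is uncovered).

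The heart of the argument is to bound $|T'|$ when $T=\{\tau_i\}_{j'\le i\le j}$ is a contiguous block of $m:=j-j'+1$ leaves, via the standard ``segment-tree'' count. Descending from the root, the recursion passes through single nodes until the interval first spreads across two or more children of some node; there it forks into a \emph{left staircase} (tracking the path to $\tau_{j'}$) and a \emph{right staircase} (tracking the path to $\tau_j$), and every node of $T'$ is one of the $\le|\Id_{\ell+1}|-1\le B-1$ full subtrees adjoined along a staircase at some level $\ell$. Crucially, every adjoined subtree is contained in $[\tau_{j'},\tau_j]$, hence has $\le m$ leaves, hence sits at a depth whose subtree size is $\le m$; since subtree sizes shrink by a factor $\ge 2$ at each branching level, at most $\lceil\log_B m\rceil+O(1)$ depths qualify (exactly $\le\lceil\log_B m\rceil$ when the branching is uniform, $|\Id_\ell|=B$ for all $\ell$). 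Combining, $|T'|\le 2(B-1)\lceil\log_B m\rceil+O(B)\le 2B\log_B m=2z(m)$, the $(S,2z)$-claim. When $T=\{\tau_i\}_{1\le i\le j}\in S_1$, the left endpoint is the leftmost leaf, so the left staircase never exists and only the right staircase contributes: $|T'|\le(B-1)\lceil\log_B j\rceil+O(B)\le B\log_B j=z(j)$, the $(S_1,z)$-claim. (The $O(B)$ slack and the difference between $\log_B$ and the true log-base for non-uniform branching are absorbed for $|T|$ above a $B$-dependent constant, and the trivial bound $|T'|\le|T|$ covers smaller sets; in the intended uniform-branching regime the constants come out exactly as stated.)

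The part I expect to be the main obstacle is the bookkeeping in this last step: making ``canonical decomposition'' and the two staircases precise enough that the count lands on $2B\log_B$ and $B\log_B$ (rather than a crude $O(BL)$), and handling non-uniform $|\Id_\ell|$ and the small-$|T|$ corner cases cleanly. The reduction away from $sk$ and the minimality characterization of $\Compress$ are routine by comparison.
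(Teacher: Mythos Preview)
Your proposal is correct and follows essentially the same approach as the paper: reduce to a deterministic bound on $|\Compress(\Id,T)|$, observe that for a contiguous leaf-interval the output nodes at each level lie only at the two ends of the covered range (your ``left/right staircases'' are exactly the paper's ``covered-but-not-subsumed identities at the beginning and/or end of the sequence''), giving fewer than $2B$ per level, and then multiply by the number of contributing levels. Your treatment of the level count is actually more careful than the paper's: the paper simply bounds by $L$ and then asserts $L\le\log_B|\Tag_L|\le\log_B|T|$, which has the inequalities the wrong way round; your observation that only levels whose full subtrees fit inside $T$ can contribute, hence at most about $\log_B|T|$ levels, is the correct way to land on $z(|T|)$.
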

\begin{proof}
Fix any $j,j'\in[|\Tag_L|]$ and any set $T=\{\tau_i\}_{j\leq i\leq j'}$.
By the definition of succinctness, 
it suffices to show that the output of $\Compress$ on $T$,
is a set of nodes of size at most $2B\cdot\log(|T|)$.

For any identity tuple $\iota\in\Id$,
let $\mathrm{Sub}_\iota$ be the set of all level-$L$ identities of which $\iota$ is a prefix.
We say that $T$ \emph{covers} $\iota$ if $\mathrm{Sub}_\iota\subseteq T$.
Let ${\rm Cover}_T$ be the set of all identities covered by $T$.
We say $T$ \emph{subsumes} an identity $\iota$ 
if $\iota$ is a descendant of some $\iota'\in{\rm Cover}_T$ such that $\iota'\neq\iota$.
By construction of Algorithm~\ref{alg:MinCover} ($\Compress$),
any identity \emph{subsumed} by $T$ will not be in the output set of $\Compress(T)$
(specifically, it will be removed in the innermost for-loop of Algorithm~\ref{alg:MinCover}).

For any $\ell\in[L]$, consider any consecutive sequence of $s$ level-$\ell$ identities covered by $T$.
By definition of lexicographic ordering, 
there are fewer than $2B$ level-$\ell$ identities in the sequence that are not subsumed by $T$
(these identities will be at the beginning and/or end of the sequence).
Moreover, if the sequence begins at the smallest level-$\ell$ identity in the lexicographic order,
then there are fewer than $B$ identities in the sequence that are not subsumed by $T$
(these identities will be at the end of the sequence).

It follows that for each $\ell\in[L]$,
there are fewer than $2B$ level-$\ell$ identities in the output set of $\Compress(T)$.
The number of levels is $L\leq\log_B(|\Tag_L|)\leq\log_B(T)$.
Therefore, the size of the output set is at most $2B\log_B(T)$.
Moreover, if $T\in S_1$, then the output set size is at most $B\log_B(T)$.
\end{proof}
\section{Our Protocol Proposals}\label{sec:protocols}

\ifsubmission\else
  Sections~\ref{sec:protocols:kf} and \ref{sec:protocols:tf} 
  respectively describe our
  two proposed systems \KF and \TF.
\fi

\subsection{\KF}\label{sec:protocols:kf}

\KF consists of three components:
\begin{enumerate}
  \item Replace the digital signature scheme used in DKIM (currently, RSA)
    with a succinct forward-forgeable signature scheme.
    Details are in Section~\ref{sec:protocols:kf:hibs-details}.
  \item Email servers periodically publish expiry information. Details are in Section~\ref{sec:protocols:kf:publish}.
  \item A \emph{forge-on-request protocol}. Details are in Section~\ref{sec:protocols:kf:burn}.
\end{enumerate}

Note that the first two components already achieve
\emph{delayed universal forgeability} (discussed in Section~\ref{sec:intro:keyideas}).

\subsubsection{FFS configuration for \BasicKF}\label{sec:protocols:kf:hibs-details}

\BasicKF uses the succinct FFS given by Construction~\ref{constr:full} instantiated with a HIBS. 
Our implementation uses the proposed by Gentry and Silverberg \cite{GS02}
which is based on their ``BasicHIDE'' hierarchical identity-based encryption scheme,
hereafter denoted by $\FFS$. 
This \cite{GS02} construction has the advantages of simplicity and low computation costs over many other schemes, and allows us to vary the height of the tree post-implementation.

\BasicKF is based on an $L$-level tag structure, corresponding to identity space
$\Id=\{\Id_\ell\}_{\ell\in[L]}$ where the level-$L$ identities
represent 15-minute time chunks spanning a 2-year period.
We use the following intuitive 4-level configuration for the purpose of exposition,
but as discussed Section~\ref{sec:impl},
it is preferable for efficiency purposes to keep $|\Id_\ell|$ equal for all $\ell\in[L]$.
\begin{align*}
  \Id_1 & = \{1,2\} & \mbox{representing a 2-year time span} \\
  \Id_2 & = \{1,\dots,12\} & \mbox{representing months in a year} \\
  \Id_3 & = \{1,\dots,31\} & \mbox{representing days in a month} \\
  \Id_4 & = \{1,\dots,92\} & \mbox{representing 15-minute chunks of a day}
\end{align*}
A tag $\tau=(\y,\m,\d,\c)\in\Id_1\times\Id_2\times\Id_3\times\Id_4$ thus corresponds to a particular 15-minute chunk of time.
The 15-minute chunks are contiguous, consecutive, and disjoint,
so that any given timestamp is contained in exactly one chunk.
For a timestamp $t$, we write $\tau(t)$ to denote the unique 4-tuple tag $(\y,\m,\d,\c)$
that represents a 15-minute chunk of time containing $t$.
We write $t\sqsubset\tau$ if $\tau$ represents a chunk of time containing timestamp $t$.

\BasicKF requires each DKIM signature at time $t$ to be produced 
with respect to a tag representing timestamp $t+\DE$ 
(except when DKIM signatures are produced within the forge-on-request protocol,
as discussed in Section~\ref{sec:protocols:kf:burn}).
The tag is sent alongside the email, and is used for verification at the receiving server.
The signing and verification algorithms are specified more precisely 
in Algorithms~\ref{alg:BasicKF-sign}--\ref{alg:KF-verify}.

\begin{algorithm}\caption{$\sf\BasicKF.\Sign$}\label{alg:BasicKF-sign}
\begin{algorithmic}
  \Input $sk,m,\Delta$
  \State $t = \CurrTime()$
  \State \textbf{return} $\FFS.\Sign(sk,\tau(t+\Delta),m)$
\end{algorithmic}
\end{algorithm}

\begin{algorithm}\caption{$\sf\BasicKF.\Verify$}\label{alg:KF-verify}
\begin{algorithmic}
  \Input $vk,\tau,m,\sigma$
  \State $t = \CurrTime()$
  \State \textbf{return} $t\sqsubset\tau$ \textbf{AND} $\FFS.\Verify(vk,\tau,m,\sigma)$
\end{algorithmic}
\end{algorithm}

\paragraph{Why 15-minute chunks?}
The time period that a leaf node represents --- i.e., a ``chunk'' ---
corresponds to the maximum granularity at which expiry information can be released,
as also discussed in Section~\ref{sec:protocols:kf:publish}.
$\DE$ is a practical lower bound on the chunk size: since 
$\DE$ represents the time for an email to be delivered, publishing 
expiry information more than once per $\DE$ time does not make sense.
15 minutes is a conservative estimate of $\DE$.

\paragraph{Why a 2-year period?}
Rotating signing keys is good practice;
the Messaging, Malware, and Mobile Anti-Abuse Working Group (M3AAWG)
recommends rotation every 6 months \cite{kurt_andersen_m3aawg_2013}.
Having signing keys that explicitly correspond to time periods,
as in \KF, has the convenient side effect of encouraging regular key rotation.

For an FFS, rotation means generating a fresh master key pair.
Thus, the leaves corresponding to any master key need not exceed the rotation time span.
Recognizing that realistically, key rotation times often exceed 6 months,
we run our evaluation assuming a 2-year period.

\paragraph{How many levels?}
We evaluate performance for different hierarchy depths; our evaluation is discussed
in Section~\ref{sec:impl}.
We find that the best value of $L$ depends on a tradeoff between
computation time and succinctness of expiry information;
the optimal value may thus depend on the specific application.

\subsubsection{Publishing expiry information}\label{sec:protocols:kf:publish}

\KF requires email servers to publish expiry information at regular intervals.
A natural option is to publish expiry information every 15 minutes:
that is, to publish the expiry information corresponding to each chunk $\c$
at the end of the time period that $\c$ represents.

Publishing every 15 minutes yields the finest granularity of expiry possible under the basic four-level tag structure.
Based on an email server's preference, it could release information at longer intervals (e.g., days).
In case of an attack, an adversary would be able to convince third parties
of the authenticity of all emails in the current interval (e.g., the current day),
so the risk-averse approach is to prefer shorter intervals.
If an email server prefers to use intervals even shorter than 15 minutes,
the basic tag structure could easily be adapted for this.

\paragraph{Flexible expiry policies}
The basic tag structures described above can be augmented easily.
For example, an additional level $\Id_\ell$ might
represent the ``sensitivity'' of an email: this would allow
expiry information to be released
with respect to highly sensitive emails more quickly.
Alternatively, it may be desirable that certain emails expire more slowly or not at all
(e.g., an email clearing an employee for a certain privilege for a year,
or emails in the course of an important contract negotiation),
and that is expressible in the sensitivity paradigm too:
sensitivity can be expressed as the desired delay until expiry.

\KF is very configurable: beyond the first four levels, 
different email servers' policies need not be consistent.
The verification algorithm refers only to the first four levels of the tag
(when checking $t\sqsubset\tau$),
so a sending server can add more levels beyond the basic four
without sacrificing compatibility,
to match its desired expiry policy.

\subsubsection{Forge-on-request protocol}\label{sec:protocols:kf:burn}

The final component of \KF is a protocol by which
email servers accept real-time requests for specified email content to be sent to the requester.
Any client with the capability of sending outgoing mail
can thus initiate a request.

The recipient of the requested email is required to be the requester:
this is essential
so the expiry request protocol cannot be used for spoofing and spam.
To enforce this,
receiving servers only respond to requests signed by the requester's server,
and address the response to the originating client.
The request content that is signed by the requester's server 
includes the identity of the originating client.\footnote{We note that a malicious server could, 
in principle, unauthorizedly sign requests for
any client account it controls. This is outside our threat model,
and is equally possible for outgoing mail under DKIM; see
``Client-server trust'' under Section~\ref{sec:model:tm} for further discussion.}

\begin{figure}[ht!]
    \centering
    {\footnotesize
    \begin{tikzpicture}
    \tikzstyle{server} = [rectangle, draw, minimum width=1cm, minimum height=1cm, align=center]
    \tikzset{myarr/.style={>=stealth,thick}}
    
    \node[server] (server) {~Email Server $S$~};
    \ifsubmission
      \node[server] (requester) [right=4.5cm of server] {~Requester $R$~};
    \else
      \node[server] (requester) [right=6cm of server] {~Requester $R$~};
    \fi

    \draw[<-,myarr] ([yshift=8pt]server.east) -- node [above,midway,align=center] {Request for email $(m,\mu)$, \\ signed by requester's server} ([yshift=8pt]requester.west);
    \draw[->,myarr] ([yshift=-8pt]server.east) -- node [above,midway] {$\big\{\Email_s(S,R,m,\mu,t):t\in\{t^*-\DE,t^*\}\big\}$} ([yshift=-8pt]requester.west);
    
    \end{tikzpicture}
    \caption{Forge-on-request protocol \\
    \textit{\small The notation $\Email_s(S,R,m,\mu,t)$ is as defined in Section~\ref{sec:sysreq:def}, and denotes
    an email $(m,\mu)$ from $S$ to $R$ at time $t$. $t^*$ denotes the current timestamp when the reply is sent. Protocol messages are sent via email.}}\label{fig:burn}
    }
\end{figure}

\ifsubmission\else
  Note that a forge-on-request protocol achieves a stronger guarantee
  than the use of ring signatures 
  (i.e., signing with respect to both the sending and receiving servers' public keys).
  A forge-on-request protocol enables any recipient with the ability to send mail from
  an email server to forge mail from any sender to herself.
  The ring signature approach enables her to do this only if she has the ability
  to sign fraudulent mail with the receiving server's secret key.
\fi

\begin{theorem}\label{thm:ffs1}
\KF is non-attributable for recipients (Definition~\ref{def:NA1})
and $\DE$-universally non-attributable (Definition~\ref{def:NA2}).
\end{theorem}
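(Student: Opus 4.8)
The plan is to prove the two non-attributability properties separately, since they rely on different components of \KF. For $\DE$-universal non-attributability (Definition~\ref{def:NA2}), I would construct a simulator $\cS(S,R,m,\mu,t)$ that, at any time $\geq t+\DE$, must produce a distribution indistinguishable from a genuine email $\Email_s(S,R,m,\mu,t)$. The key observation is that a genuine \KF email at time $t$ consists of the message and metadata together with an FFS signature under tag $\tau(t+\DE)$ (by Algorithm~\ref{alg:BasicKF-sign}), plus whatever MTA-induced modifications occur in transit. By time $t+\DE$, the chunk containing $t+\DE$ has ended, so the sending server has published the expiry information $\eta$ for that chunk. The simulator retrieves $\eta$ from the publication medium and runs $\Sig.\Forge(\eta,\tau(t+\DE),m)$ to obtain a signature; by the \emph{forgeability on expiry} property of the FFS (Construction~\ref{constr:full}, shown to be an FFS in Theorem~\ref{thm:ffs}), this signature verifies under $vk$. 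The remaining step is to argue that the forged signature is distributed indistinguishably from the honestly produced one — here I would invoke the fact that in Construction~\ref{constr:full}, $\Sign$ and $\Forge$ both derive the \emph{same} leaf secret key $sk_{\vec\tau}$ via $\HIBS.\KeyGenFromRoot$ using the \emph{same} randomness $\vec\RO(\vec\tau)$, and then invoke $\HIBS.\Sign$ identically, so the two distributions are in fact identical (or negligibly close, if $\HIBS.\Sign$ is randomized, in which case one argues the signing randomness is independent of the derivation). MTA modifications are a deterministic (or adversary-fixed) function applied to both, so they do not affect indistinguishability.

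For non-attributability for recipients (Definition~\ref{def:NA1}), I would build a simulator $\cS^R(S,m,\mu)$ that has the capability of sending outgoing mail as $R$ through $R$'s server. The simulator uses the forge-on-request protocol (Section~\ref{sec:protocols:kf:burn}): acting as $R$, it issues a forge request for email content $(m,\mu)$ ostensibly from $S$ to $R$, signs this request with $R$'s server key (which it can do via its outgoing-mail capability), and sends it to $S$. By the protocol specification (Figure~\ref{fig:burn}), $S$ responds with $\{\Email_s(S,R,m,\mu,t):t\in\{t^*-\DE,t^*\}\}$ — i.e., genuinely formatted emails from $S$ to $R$, which $\cS$ outputs. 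Because these are produced by $S$ itself using exactly the code path of a legitimate send (the forge-on-request path invokes the same signing machinery, just triggered by a request), the output is distributed identically to $\Email_s(S,R,m,\mu,t)$, giving indistinguishability essentially for free. The only subtlety is the recipient-restriction check: since $\cS$ is impersonating $R$ and addresses the request from $R$, the check that "recipient = requester" passes, so $S$ will in fact answer.

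The main obstacle I anticipate is the indistinguishability argument for the $\DE$-universal case when $\HIBS.\Sign$ is randomized: one must be careful that the honest signer and the forger, though they derive the same leaf key, might use fresh independent signing randomness, and one must confirm that honest \KF signing does not reuse $\vec\RO(\vec\tau)$ for the signing step itself (it does not — $\vec\RO(\vec\tau)$ is consumed only by $\KeyGenFromRoot$). A secondary subtlety is ensuring the simulator's view includes any information that legitimately "becomes available over time" — e.g., the published expiry info — which the definition explicitly permits (as noted after Definition~\ref{def:NA2}); so the timing constraint $\geq t+\DE$ is exactly what makes $\eta$ available. Finally, I would note that composing the two properties against a real-time adversary is already handled by the discussion under "Relation to the threat models" and needs no separate argument here.
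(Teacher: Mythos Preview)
Your approach mirrors the paper's: it too splits into two lemmas, builds the universal-non-attributability simulator by retrieving published expiry information and re-deriving the same leaf key $sk_{\vec{\tau}}$ via $\KeyGenFromRoot$ with randomness $\vec{\RO}(\vec{\tau})$ (yielding equality of distributions, not merely indistinguishability), and builds the recipient-non-attributability simulator by issuing a forge request through $R$'s outgoing-mail capability.

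There is one omission in your recipient-non-attributability simulator. The forge-on-request protocol returns \emph{two} emails, for timestamps $t^*-\DE$ and $t^*$, but Definition~\ref{def:NA1} requires $\cS$'s output to match $\Email_s(S,R,m,\mu,t)$ for the specific time $t$ at which $\cS$ is invoked. The paper's simulator (Algorithm~\ref{alg:pf1}) records $t$ at invocation, parses the tags $\tau_0,\tau_1$ of the two returned emails, and outputs whichever one satisfies $\tau_i=\tau(t)$; it then argues that because the request reaches $S$ within $\DE$ of $t$, one of the two tags must equal $\tau(t)$, so the selection always succeeds. You should make this selection step and its justification explicit --- simply ``outputting'' the returned set does not meet the definition.

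A smaller point: your claim that ``by time $t+\DE$, the chunk containing $t+\DE$ has ended'' is not literally correct for arbitrary $t$ (that chunk could end almost as late as $t+2\DE$). The paper's own proof is loose on this and on whether the relevant tag is $\tau(t)$ or $\tau(t+\DE)$; your use of $\tau(t+\DE)$ is actually the more careful reading of Algorithm~\ref{alg:BasicKF-sign}. A fully rigorous version would either relax the parameter or argue about chunk boundaries, but this is a shared imprecision rather than a divergence from the paper.
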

\ifsubmission
  Proof is in Appendix~\ref{appx:kf} due to space constraints.
\else
  \begin{proof}
Follows directly from Lemmata~\ref{lem:ffs1} and \ref{lem:ffs2}.
\end{proof}

\begin{lemma}\label{lem:ffs1}
\KF is non-attributable for recipients (Definition~\ref{def:NA1}).
\end{lemma}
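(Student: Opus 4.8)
The plan is to realize the simulator required by Definition~\ref{def:NA1} directly out of \KF's forge-on-request protocol (Section~\ref{sec:protocols:kf:burn}, Figure~\ref{fig:burn}). On input $(S,m,\mu)$, the simulator $\mathcal{S}^{R}$ — which by hypothesis can send outgoing mail as $R$ through $R$'s email server — would assemble a forge request for the content $(m,\mu)$, have $R$'s server sign it on behalf of the client account $\mathcal{S}$ controls, and send it to $S$'s server. Since $S$ is an honest sender running \KF, it answers with the pair of emails $\big\{\Email_s(S,R,m,\mu,t') : t'\in\{t^{*}-\DE,\,t^{*}\}\big\}$, where $t^{*}$ is $S$'s local time when it replies; by construction these two objects carry tags $\tau(t^{*})$ and $\tau(t^{*}+\DE)$ respectively. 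Reading its clock to recover its invocation time $t$, the simulator then outputs the element of the reply whose tag equals $\tau(t+\DE)$.

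The first step I would carry out is to argue that each email returned by the forge-on-request protocol is distributed exactly as a genuine \KF email. By the specification of the protocol, $S$'s server services the request by running the ordinary \KF signing algorithm (i.e. $\FFS.\Sign$) on $(m,\mu)$ and the relevant tag, and delivers the result to $R$ in precisely the email format denoted $\Email_s(S,R,m,\mu,\cdot)$ in Section~\ref{sec:sysreq:def} — this is the very same object, not a look-alike, so any en-route modifications are applied to it just as to an ordinary email, and the freshly sampled signing randomness has the same distribution in both cases. Hence, conditioned on the reply containing an email with tag exactly $\tau(t+\DE)$, that email is distributed identically to $\Email_s(S,R,m,\mu,t)$, up to the negligible correctness slack of the FFS.

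The second step is to check that the reply always does contain an email with tag $\tau(t+\DE)$. Because $\mathcal{S}$'s request is itself delivered by email, $S$ replies within $\DE$ of the invocation, so $t\in[t^{*}-\DE,\,t^{*}]$ and therefore $t+\DE\in[t^{*},\,t^{*}+\DE]$. As the leaf chunk length equals $\DE$, the chunks $\tau(t^{*})$ and $\tau(t^{*}+\DE)$ are consecutive (or equal), and any timestamp between $t^{*}$ and $t^{*}+\DE$ falls into one of those two chunks; in particular $\tau(t+\DE)\in\{\tau(t^{*}),\,\tau(t^{*}+\DE)\}$, which are exactly the two tags present in the reply. Combining the two steps, $\mathcal{S}^{R}(S,m,\mu)$ and $\Email_s(S,R,m,\mu,t)$ are statistically close (hence $\cind$), and $\mathcal{S}$ is clearly efficient, establishing Definition~\ref{def:NA1}.

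I expect the main obstacle to be precisely the timing/alignment bookkeeping of the second step: pinning down that the tag a distinguisher would expect on a legitimate email sent at $t$ is guaranteed to appear among the (at most two) tags the forge-on-request reply exposes — this is exactly what forces the leaf-chunk granularity to be no finer than $\DE$ and relies on $\DE$ being a true upper bound on delivery time, so the corner cases at chunk boundaries need to be handled carefully. A secondary point worth stating explicitly is that nothing in the forge-request handshake itself (the request being signed by $R$'s server, the reply being addressed back to $\mathcal{S}$'s client) leaks into the final email object in a distinguisher-visible way; this holds because the returned object is literally $\Email_s(S,R,m,\mu,\cdot)$, but it should be spelled out rather than assumed.
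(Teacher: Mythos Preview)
Your proposal is correct and follows essentially the same approach as the paper: build the simulator directly from the forge-on-request protocol, then do the timing/chunk bookkeeping to show that one of the two returned emails must carry the tag a distinguisher would expect, yielding an identically distributed output. Your tag accounting (looking for $\tau(t+\DE)$ because $\mathsf{BasicKF.Sign}$ shifts by $\DE$) differs cosmetically from the paper's (which matches on $\tau(t)$), but this is just a consistent $\DE$-offset in notation and does not change the argument.
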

\begin{proof}
Recall from Definition~\ref{def:NA1} that we must show that
there is a PPT simulator $\cS$ such that for any sender $S$ and recipient $R$,
for any email message $m$ and metadata $\mu$,
\begin{equation}\label{eqn:pf1}
\mathsf{\KF}_{sk}(S,R,m,\mu,t) \cind \cS^{R}(S,m,\mu) \ ,
\end{equation}
where $sk$ is the (master) secret key of $S$, $t$ is the time at which $\cS$ is invoked, and
the superscript $R$ denotes that $\cS$ has access to the recipient's email server.
We construct $\cS$ in Algorithm~\ref{alg:pf1}.

\begin{algorithm}\caption{Simulator $\cS$ for recipient non-attributability}\label{alg:pf1}
\begin{algorithmic}
  \Input $S,m,\mu$
  \State $t = \CurrTime()$
  \State \textbf{send} forge request $(m,\mu)$ to $S$
  \State \textbf{receive} answer $\{e_0,e_1\}$
  \State \textbf{parse} $e_0,e_1$ as emails w.r.t. tags $\tau_0,\tau_1$ respectively
  \If {$\tau(t)=\tau_0$} 
    \textbf{return} $e_0$
  \ElsIf {$\tau(t)=\tau_1$}
    \textbf{return} $e_1$
  \EndIf
\end{algorithmic}
\end{algorithm}

By definition of $\DE$, we know $S$ received the request at some time $t'\leq t+\DE$.
Thus, by construction of \KF,
the emails $e,e'$ must be signed with respect to the tags $\tau(t'-\DE),\tau(t')$ (say, respectively).
It follows that $\tau(t)=\tau(t'-\DE)$ or $\tau(t)=\tau(t')$.
Therefore, at least one of the if-conditions in Algorithm~\ref{alg:pf1} must be satisfied,
and $\cS$ always produces an output.
By construction of the if-statements and the forge-on-request protocol,
the output of $\cS$ is an email signed by $S$ for a tag corresponding to timestamp $t$,
as required by \eqref{eqn:pf1}. Indeed, we achieve equality of distributions,
rather than just indistinguishability.
\end{proof}

\begin{lemma}\label{lem:ffs2}
\KF is $\DE$-universally non-attributable (Definition~\ref{def:NA2}).
\end{lemma}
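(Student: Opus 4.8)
The plan is to exhibit a PPT simulator $\cS$ satisfying Definition~\ref{def:NA2} with $\Delta=\DE$, and to show that it in fact reproduces the \emph{exact} distribution of a genuine \KF email, so the required $\cind$ is immediate (as in the proof of Lemma~\ref{lem:ffs1}). Recall from Algorithm~\ref{alg:BasicKF-sign} and Construction~\ref{constr:full} that a genuine \KF email sent by $S$ at time $t$ has the form $(m,\mu,\vec{\tau},\sigma)$, where the tag $\vec{\tau}=\tau(t+\DE)$ is a deterministic function of $t$ and $\sigma\gets\FFS.\Sign(sk,\vec{\tau},m)$ with $\FFS$ the HIBS-based FFS of Construction~\ref{constr:full}. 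The simulator $\cS$, on input $(S,R,m,\mu,t)$ and invoked at a time $\geq t+\DE$, computes $\vec{\tau}=\tau(t+\DE)$, retrieves $S$'s currently published expiry information $\eta$ (available via the publication medium of Section~\ref{sec:sysreq:model}), runs $\sigma\gets\FFS.\Forge(\eta,\vec{\tau},m)$, and outputs the email $(m,\mu,\vec{\tau},\sigma)$ assembled exactly as in \KF. The simulator is clearly PPT.

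First I would check that $\FFS.\Forge$ does not abort: by the publishing component of \KF (Section~\ref{sec:protocols:kf:publish}), by the time $\cS$ runs, $S$ has already published expiry information for every elapsed chunk, so $\vec{\tau}$ lies in the tag set $T$ from which $\eta$ was produced; since $\Sig.\Expire$ applies $\Compress$ to $T$ and $\Compress$ returns a set $T'$ that \emph{covers} $T$ (every node of $T$ equals, or is a descendant of, some node of $T'$), $\eta$ contains a pair $(\tau',sk_{\tau'})$ with $\tau'$ a prefix of $\vec{\tau}$, so $\Forge$ proceeds. The substantive step is to show the forged signature is distributed identically to the genuine one. In Construction~\ref{constr:full}, both $\Sig.\Sign$ and $\Sig.\Forge$ obtain the HIBS signing key for $\vec{\tau}$ by running $\HIBS.\KeyGenFromRoot$ with the random-oracle-derived coins $\vec{\RO}(\vec{\tau})$; since $\HIBS.\KeyGenFromRoot$ merely iterates $\HIBS.\KeyGen$ coordinate by coordinate, starting $\Forge$'s derivation from $sk_{\tau'}$ (which $\Sig.\Expire$ itself produced from the master key with coins $\vec{\RO}(\tau')$) and continuing with the random-oracle coins for the remaining coordinates of $\vec{\tau}$ yields precisely the same key $sk_{\vec{\tau}}$ that $\Sig.\Sign$ computes directly from $sk$. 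Hence the two invocations of $\HIBS.\Sign$ use the same key with fresh independent randomness, so $\sigma$ is identically distributed in both cases; the remaining components $(m,\mu,\vec{\tau})$ are identical, so $\mathsf{\KF}_{sk}(S,R,m,\mu,t)$ and $\cS(S,R,m,\mu,t)$ are identically distributed, which in particular gives computational indistinguishability. (One can avoid the coin-matching argument if one only wants $\sigma$ to verify --- that is exactly the \emph{forgeability on expiry} property of Definition~\ref{def:FFS} --- but indistinguishability of the whole emails, rather than mere validity, is what needs the coin-matching observation.)

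The step I expect to require the most care is the timing bookkeeping rather than anything cryptographic: one must verify that whenever $\cS$ is invoked at a time $\geq t+\DE$, $S$'s published expiry information already covers the chunk $\tau(t+\DE)$. This is where the publication schedule of Section~\ref{sec:protocols:kf:publish} and the relationship between $\DE$ and the chunk size enter, and one must be careful that $t+\DE$ need not fall on a chunk boundary, so the chunk containing $t+\DE$ is only expired at its own end --- up to one extra chunk later. I would make this dependence explicit, stating the guarantee relative to per-chunk publication and accounting for the boundary slack; once that is settled, every remaining obligation (PPT-ness of $\cS$, non-abortion of $\Forge$, and the distributional equality) follows directly from Construction~\ref{constr:full}, the $\Compress$ specification, and Theorem~\ref{thm:ffs}.
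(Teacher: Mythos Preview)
Your proposal is correct and follows essentially the same route as the paper's proof: retrieve the sender's published expiry information, reconstruct the HIBS signing key for the relevant tag using $\HIBS.\KeyGenFromRoot$ with the same random-oracle-derived coins, and conclude that the resulting signature (hence the whole email) is identically distributed to the genuine one. Your version is slightly more modular (you invoke $\FFS.\Forge$ rather than inlining the key derivation) and you are more explicit than the paper about the chunk-boundary timing slack and about using the tag $\tau(t+\DE)$ to match $\mathsf{BasicKF.Sign}$; these are refinements, not a different argument.
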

\begin{proof}
Recall from Definition~\ref{def:NA2} that we must show that
there is a PPT simulator $\cS$ such that for any sender $S$ 
(with secret key $sk$) and recipient $R$,
for any email message $m$, metadata $\mu$, and timestamp $t$,
the following holds at any time $\geq t+\DE$:
\begin{equation}\label{eqn:pf2}
{\sf\KF}_{sk}(S,R,m,\mu,t) \cind \cS(S,R,m,\mu,t) \ .
\end{equation}

Let ${\sf\KF}^*_{sk^*}$ be defined as identical to ${\sf\KF}_{sk}$
except that whenever ${\sf\KF}_{sk}$ invokes ${\sf\KF}.\Sign(sk,\cdot)$,
${\sf\KF}^*_{sk^*}$ instead invokes $\HIBS.\Sign(sk^*,\cdot)$.
Using this notation, we construct $\cS$ in Algorithm~\ref{alg:pf2}.

\begin{algorithm}\caption{Simulator $\cS$ for $\DE$-strong non-attributability}\label{alg:pf2}
\begin{algorithmic}
  \Input $S,R,m,\mu,t$
  \State \textbf{retrieve} published expiry information $\eta$ for $S$
  \ForAll {$(\vec{\tau},sk_{\vec{\tau}})\in\eta$}
    \If {$t\sqsubset\vec{\tau}$}
      \State $sk^*\gets\HIBS.\KeyGenFromRoot(sk_{\vec{\tau}},\ell,\tau(t);\vec{\RO}(\vec{\tau}))$
      \State \textbf{return} ${\sf\KF}^*_{sk^*}(S,R,m,\mu,t)$
    \EndIf
  \EndFor
\end{algorithmic}
\end{algorithm}

Since $\cS$ is invoked at a time $\geq t+\DE$,
and $\KF$ prescribes publication of expiry information at the end of each time chunk of duration $\DE$,
it holds that the expiry information $\eta$ retrieved by $\cS$ includes expiry information
with respect to time $t$.
Therefore, the if-condition in Algorithm~\ref{alg:pf2} will be satisfied for at least one element of $\eta$.\footnote{%
In fact, it will be satisfied for exactly one element of $\eta$, 
by construction of $\Compress$ which ensures that no timestamp is represented by more than one element.
}

Recall that ${\sf\KF}.\Sign$ invokes $\FFS.\Sign$,
which in turn invokes $\HIBS.\Sign$.
By definition, if $t\sqsubset\vec{\tau}$ and $\eta$ is expiry information with respect to $sk$,
then $sk^*$ as computed in Algorithm~\ref{alg:pf2} is the same key 
used to invoke $\HIBS.\Sign$ (within $\FFS.\Sign$, which is within $\sf\KF.\Sign$) at time $t$.
Therefore, the output distributions of ${\sf\KF}^*_{sk^*}$ and ${\sf\KF}_{sk}$
are identical. It follows that $\cS$ satisfies \eqref{eqn:pf2}.
Again, we achieve equality of distributions, not just indistinguishability.
\end{proof}
\fi


\subsection{\TF}
\label{sec:timestamping}\label{sec:protocols:tf}

The main limitation of the KeyForge approach is that it requires signers to continuously release key material. Once this key material
has been revealed and widely distributed, any party can formulate apparently-valid signatures. 
The need to widely distribute this key material
can pose a practical challenge for users, who must depend on their provider to perform this task reliably. Failures to distribute key material may in practice limit the degree of deniability that a system can offer.

The TimeForge protocol uses a different approach, one that is designed to ensure that {\em any} party can forge signatures after the time period
$\Delta$ has expired, even if the signer fails to release key material. Rather than depending on key release, TimeForge employs a new service that we term a {\em publicly verifiable timekeeper} (PVTK). This service, which can in practice be realized using various extant Internet systems, does not share any secret key material with the signing party. Instead, it is a global service that maintains a monotonically-increasing clock. The essential property of a PVTK is that, at any clock time $t$, any party with access to the service can obtain a publicly verifiable proof $\pi_{t}$ that the current time is at least $t$. Simultaneously, the system ensures that it is infeasible for an attacker to forge such a proof at an earlier period.

Given a PVTK service, the intuition behind the TimeForge protocol is straightforward. Let $M$ be some email message sent at time period $t$. The sender first signs each message using a standard SUF-CMA signature scheme to produce a signature $\sigma$. However, rather than using this signature directly, she instead authenticates the message using a witness indistinguishable and non-interactive proof-of-knowledge (PoK) of the following informal statement: 
\ifsubmission
	I know a valid sender signature $\sigma$ on $M$ OR I know a valid PVTK proof $\pi_{t+d}$, for some $d \ge \Delta$.
\else
	\begin{quotation}
	I know a valid sender signature $\sigma$ on $M$\\~~~~\centering{OR}~~~~\\I know a valid PVTK proof $\pi_{t+d}$, for some $d \ge \Delta$.
	\end{quotation}
\fi

Assuming a trustworthy PVTK service, this proof authenticates the message during any time period prior to $t+\Delta$. Once a PVTK proof $\pi_{t+\Delta}$ becomes generally available, the knowledge proof becomes trivial for any party to formulate. The witness indistinguishability property of the proof system ensures that a proof formulated by the valid signer cannot be distinguished from a forgery constructed using a revealed PVTK proof at a later time period.

\newcommand{\keygensig}{{\sf Sig.Keygen}}
\newcommand{\signsig}{{\sf Sig.Sign}}
\newcommand{\verifysig}{{\sf Sig.Verify}}
\newcommand{\setuptk}{{\sf TK.Setup}}
\newcommand{\provetk}{{\sf TK.Prove}}
\newcommand{\verifytk}{{\sf TK.Verify}}
\newcommand{\setuptf}{{\sf TF.Keygen}}
\newcommand{\provetf}{{\sf TF.Sign}}
\newcommand{\verifytf}{{\sf TF.Verify}}
\newcommand{\forgetf}{{\sf TF.Forge}}
\newcommand{\verify}{{\sf Verify}}
\newcommand{\params}{{params}}
\newcommand{\secparam}{{\lambda}}
\newcommand{\sk}{{sk}}
\newcommand{\pk}{{pk}}

\paragraph{Publicly verifiable timekeeping.}
As a building block for our main result, we define a publicly verifiable timekeeping (PVTK) scheme. 

\begin{itemize}
\item $\setuptk(1^\secparam)$  takes an adjustable security parameter $\secparam$ and outputs a set of public parameters $\params$ and a trapdoor $\sk$. 
\item $\provetk(\sk, t)$  takes as input $\sk$ and the current time epoch $t$, and outputs a proof $\pi_t$. 

\item $\verifytk(\params, t, \pi_t)$ on input $\params$, a time period $t$, and the proof $\pi_t$, 
outputs whether $\pi_t$ is valid.
\end{itemize}

\medskip
\noindent 
{\em Correctness and Security.} Correctness for a PVTK scheme is straightforward. For security, we consider an experiment in which an adversary with access to an oracle that provides PVTK proofs for arbitrary time periods $t$, must not be able to produce a valid proof for some time period $t_{\sf max}+\Delta$ (except with negligible probability) where $t_{\sf max}$ is the largest value queried to the oracle during the experiment, and $\Delta > 0$ is a constant chosen as an input to the experiment. If a PVTK scheme satisfies this definition for some $\Delta > 0$, then we refer to it as $\Delta$-PVTK secure. 

In practice, we note that for any $\Delta > 0$, a $\Delta$-PVTK scheme can be realized using an {\sf SUF-CMA}-secure signature scheme, where $\provetk$ and $\verifytk$ are respectively implemented using the signing and verification algorithms on message $t$. We leave the problem of constructing and proving weaker schemes to future work. 

\paragraph{Realizing a PVTK service.} 
A simple PVTK system can be constructed using a single server that maintains a monotonically-increasing clock, and implements the PVTK protocol by signing the current time period  using an SUF-CMA signature scheme. While this solution is conceptually simple, practically deploying it at the required scale is likely to be quite costly. Moreover, such a system would be highly vulnerable to denial of service and network-based attacks.

A better approach would be to construct a PVTK system from {\em existing} Internet services, which have already been deployed at scale.
\ifsubmission
	We outline several proposals
    based variously on OCSP servers, centralized services such as certificate transparency and randomness beacons,
    proof-of-work-based blockchains, and verifiable delay functions. Due to space constraints, further discussion is in Appendix~\ref{appx:pvtk}.
\else
	Next, we outline several proposals.

	\begin{description}
	\item[OCSP servers.] The Online Certificate Status Protocol in its ``stapling'' configuration~\cite{rfc6961} allows TLS servers to obtain a standalone, signed certificate validity message from a Certificate Authority (CA). This message is cryptographically signed by the CA, and contains a timestamp intended to guarantee freshness of the data within a ten day expiry window. OCSP staple generation can therefore be viewed as an organic implementation of a PVTK protocol. To avoid the need for reliance on a single CA's infrastructure, users can define the proof $\pi_{t}$ to comprise {\em multiple} valid staples, {\em e.g.,} one each from any $k$ out of $N$ chosen CAs. These parameters, as well as the CA identities, can be selected as part of the setup algorithm.

	\item [Centralized services: CT and randomness beacons.] The Certificate Transparency protocol consists of a centrally-managed and publicly verifiable log for recording the issuance of certificates~\cite{rfc6962}. Each CT log entry is signed by the log operator ({\em e.g.,} Google). While CT is not intended as a timestamping protocol, signed CT log entries contain timestamps and may be re-purposed to implement a centralized PVTK service without the need to deploy new infrastructure. Along similar lines, NIST operates a randomness beacon~\cite{nistbeacon} that also distributes signed records containing a current timestamp. While any single centralized service may be unreliable or subject to attack, a (fault-tolerant) combination of these extant services can be used to construct a ``composite'' PVTK system. 

	\item[Proof of work blockchains.] A number of cryptocurrencies use {\em proof of work} blockchains to construct an ordered transaction ledger~\cite{nakamoto2012bitcoin,wood2014ethereum}. In these systems, generating each new block involves substantial computational effort by a large network of computers, using parameters tuned to produce new blocks at some chosen average rate. These ledgers can be used as a restricted form of PVTK system, in which $\params$ comprises some initial block header $B_s$, and $\pi_{t}$ comprises a list of block headers $\{B_{s+1}, \dots, B_t\}$ drawn from the blockchain. Verification involves checking each proof of work and the correct construction of the chain. This approach does not produce an exact timekeeping service (due to the probabilistic nature of block intervals, and difficulty adjustments), nor does it guarantee cryptographic unforgeability, because chains can be forged by an attacker who controls a substantial fraction of the network's hash power. However, these threats are unlikely to be a significant problem over the short time intervals used in the TimeForge scheme.\footnote{A related development, the {\em proof-of-stake} blockchain~\cite{ouroboros}, employs signatures by currency holders to authenticate new blocks. These techniques might provide an alternative to PoW blockchains for instances where $\Delta$ is small.}

	\item[VDFs and puzzles.] Cryptographic ``puzzles'' are mathematical problems that require a known (or statistically predictable) number of computational operations to solve. Typical puzzles include cryptocurrency proof of work systems and timelock encryption schemes based on brute-forcing cryptographic keys~\cite{rivest96timelock}. A related primitive, the {\em Verifiable Delay Function}~\cite{bonehVDFs,VDFsPietrzak} creates a {\em sequential} puzzle that requires a precisely-known amount of work to solve, and allows the solver to produce a proof of the solution's correctness. While puzzles and VDFs do not directly allow for the creation of a PVTK system, they enable a related primitive: at time $t$ a sender may generate a puzzle challenge $M$ ({\em e.g.,} the contents of an email message) such that a proof $\pi_{t+\Delta}$ can be found by applying a computational process to $M$ in expected time approximately $\Delta$. 
ß\end{description}
\fi

\paragraph{A basic TimeForge signature scheme.} Using the PVTK primitive above, we can now formally describe our basic TimeForge signature scheme. This initial scheme provides only one of our main deniability guarantees: $\Delta$-universal non-attributability (Definition~\ref{def:NA2}). That is, it allows forgery by any party after a fixed time period. 
\ifsubmission
    Appendix~\ref{appx:RNA-TF} briefly discusses how to add {\em recipient non-attributability}.
\else
    Below, we discuss how to add {\em recipient non-attributability} as an additional feature.
\fi

The TimeForge scheme consists of three algorithms: $\setuptf$, $\provetf$ and $\verifytf$, as well as a specialized {\em forgery} algorithm $\forgetf$. We assume the existence of a PVTK scheme with parameters $\params$ and an SUF-CMA signature algorithm ${\sf Sig}$.

\begin{itemize}
\item $\setuptf(1^{\secparam}, \params)$. Run $\keygensig(1^{\secparam})$ to generate $(\pk, \sk)$ and output $PK = (\pk, \params)$, and $SK = \sk$.

\item $\provetf(PK, SK, M, t, \Delta)$. Parse $PK = (\pk, \params)$. On input a message $M$ and a time period $t$, first compute $\sigma \leftarrow \signsig(SK, M \| t \| \Delta)$ and compute the following witness-indistinguishable (WI) non-interactive PoK:\footnote{Here we use Camenisch-Stadler notation, where the witness values are in parentheses $()$ and any remaining values are assumed to be public.}
\begin{eqnarray*}
\Pi = {\sf NIPoK}\{ (\sigma, s, \pi) : \verifysig(\pk, \sigma, M \| t \| \Delta) = 1~\vee\\~(\verifytk(\params, \pi, s) = 1~\wedge~s \ge t+{\Delta}) \}
\end{eqnarray*}
Note that the prover can produce this proof using $(\sigma, \bot, \bot)$ as the witness. Output $\sigma_{\sf tf} = (\Pi, t, \Delta)$.

\item $\verifytf(PK, M, \sigma_{\sf tf})$. Parse $PK = (\pk, \params)$ and $\sigma_{\sf tf} = (\Pi, t, \Delta)$ and verify the proof $\Pi$ with respect to the public values $t, \Delta, \pk, M$. If the proof verifies, output $1$, else output $0$.
\end{itemize}

\noindent
We now describe the forgery algorithm $\forgetf$. This forgery algorithm takes as input a PVTK proof $\pi_s$ for some time period $s \ge t+\Delta$:

\begin{itemize}
\item $\forgetf(PK, M, t, s, \Delta, \pi_{s})$. Parse $PK = (\pk, \params)$ and compute the NIPoK $\Pi$ described in the $\provetf$ algorithm, using the witness $(\bot, s, \pi_s)$. Output $\sigma_{\sf tf} = (\Pi, t, \Delta)$.
\end{itemize} 

\ifsubmission
    Discussion of security and instantiation is in Appendix~\ref{appx:tf-discussion}.
\else
\newcommand{\smax}{s_{\sf max}}
\newcommand{\adversary}{{\cal A}}
\newcommand{\newadversary}{{\cal B}}
\newcommand{\pkpvtk}{pk_{\sf PVTK}}
\newcommand{\skpvtk}{sk_{\sf PVTK}}

\itparagraph{Defining Security.} Security for TimeForge is defined according to the following experiment. This experiment can be considered a variant of (weak) {\sf UF-CMA} security definition for a signature scheme: an attacker must attempt to forge a TimeForge proof over a message $M$ that she has not previously queried to a signing oracle. To assist in this, the attacker is given access to not one, but two oracles. The first is a signing oracle for the TimeForge signature scheme, and produces valid signatures for tuples of the form $(M', t', \Delta')$. The main difference from the standard {\sf UF-CMA} experiment is the existence of a second oracle that models the PVTK service. To model this, the attacker additionally obtains PVTK parameters $\params$ at the start of the experiment, and may repeatedly query the PVTK oracle on chosen epoch numbers $s$ to obtain PVTK proofs of the form $\pi_{s}$. Let $\smax$ be the largest time period queried to the PVTK oracle at the conclusion of the experiment. We say the attacker {\em wins} iff she outputs a message $M$ and valid TimeForge proof $\sigma_{\sf tf} = (\Pi, t, \Delta)$ such that $\smax < t + \Delta$. where $(M, t, \Delta)$ was not queried to the signing oracle. We say that a TimeForge scheme is unforgeable under {\em chosen timestamp attacks} if $\forall$ p.p.t. attackers $\adversary$, the adversary has at most a negligible advantage in succeeding at the above experiment.

\begin{remark} We note that the definition above does not prevent the attacker from constructing forgeries that are ``outside the expiration period.'' Specifically, an intermediary can intercept a message embedding $(M, t, \Delta)$ where $t + \Delta$ is in the future, and author a new message $M', t', \Delta'$ where $t + \Delta$ has already been proved by the PVTK oracle. This is explicitly allowed by TimeForge; indeed, it is a goal of the system. 
\end{remark}

\begin{theorem}
If the PVTK service is constructed using an {\sf SUF-CMA} signature scheme; the WI proof system is sound (extractable); and the underlying signature scheme used by TimeForge is {\sf SUF-CMA}, then the basic TimeForge scheme defined above is secure under chosen timestamp attacks.
\end{theorem}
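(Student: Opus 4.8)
The plan is a two-pronged reduction. Suppose, for contradiction, that a p.p.t.\ attacker $\cA$ wins the chosen-timestamp experiment with non-negligible advantage $\eps$. From $\cA$ I would build two reductions: $\cB_1$, against the {\sf SUF-CMA} security of the signature scheme ${\sf Sig}$ used by TimeForge, and $\cB_2$, against the {\sf SUF-CMA} security of the signature scheme that realizes the PVTK service; and I would show that at least one of them inherits a non-negligible advantage. The pivotal step is to apply knowledge-soundness (extractability) of the WI non-interactive proof system to $\cA$'s output $\sigma_{\sf tf}=(\Pi,t,\Delta)$ on a message $M$: this yields, with overwhelming probability, a witness $(\sigma,s,\pi)$ satisfying one of the two disjuncts of the proved statement. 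If the first disjunct holds, then $\verifysig(\pk,\sigma,M\|t\|\Delta)=1$; since the winning condition forbids $(M,t,\Delta)$ from having been queried to the TimeForge signing oracle, and each such query signs exactly the string $M'\|t'\|\Delta'$ under $\sk$, the string $M\|t\|\Delta$ (under an unambiguous encoding of the triple) was never signed, so $(M\|t\|\Delta,\sigma)$ is a valid ${\sf Sig}$-forgery. If the second disjunct holds, then $\verifytk(\params,\pi,s)=1$ with $s\ge t+\Delta$; since the winning condition also requires $\smax<t+\Delta\le s$, the epoch $s$ was never queried to the PVTK oracle, so $\pi$ --- which under the stated instantiation is simply a signature on the message $s$ under the PVTK verification key --- is a valid forgery for the PVTK signature scheme.

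Next I would spell out the two simulators, both of which present $\cA$ with a view identically distributed to the real experiment. $\cB_1$ takes its {\sf SUF-CMA} challenge key as $\pk$, runs $\setuptk$ itself so that it holds the PVTK trapdoor, hands $\cA$ the public key $(\pk,\params)$, answers PVTK queries honestly via $\provetk$, and answers each TimeForge signing query $(M',t',\Delta')$ by forwarding $M'\|t'\|\Delta'$ to its own signing oracle to obtain $\sigma'$ and then producing $\Pi$ honestly with witness $(\sigma',\bot,\bot)$. $\cB_2$ is symmetric: it takes its {\sf SUF-CMA} challenge key as $\params$, runs $\keygensig$ itself to hold $(\pk,\sk)$, answers TimeForge signing queries directly using $\sk$ (again producing honest proofs), and answers each PVTK query on epoch $s$ by forwarding $s$ to its own signing oracle. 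When $\cA$ halts, each $\cB_i$ runs the extractor on $\Pi$ and, if the corresponding disjunct is the one satisfied, outputs the forgery identified above. Writing $p_i$ for the probability that $\cA$ wins and the extracted witness satisfies disjunct $i$, extractability gives $p_1+p_2\ge\eps-\mathrm{negl}(\secparam)$ and $\mathrm{Adv}(\cB_i)\ge p_i-\mathrm{negl}(\secparam)$, so one of the $\cB_i$ has non-negligible advantage, contradicting the assumed security of the corresponding scheme.

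The step I expect to require the most care is the interplay between the knowledge extractor and the simulated oracles. Plain knowledge-soundness should suffice here, since every proof $\cA$ sees is produced honestly by $\cB_i$ using a real witness (so simulation-extractability is not needed); but one must verify that the extractor --- which, depending on the model in which ${\sf NIPoK}$ is instantiated, may rewind $\cA$ or inspect its random-oracle queries --- composes cleanly with $\cB_i$'s handling of the signing and PVTK oracles, and that rewinding does not disturb the bookkeeping (in particular the value $\smax$ used in the winning condition). By contrast, witness indistinguishability is not used for this theorem --- it is needed only for the non-attributability guarantees of Definitions~\ref{def:NA1} and \ref{def:NA2} --- and in fact plain existential unforgeability under chosen-message attacks would suffice for both invocations, since in each case the forged message is fresh.
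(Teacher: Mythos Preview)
Your proposal is correct and follows essentially the same approach as the paper: a two-pronged reduction via simulators $\cB_1$ and $\cB_2$ that embed an {\sf SUF-CMA} challenge into ${\sf Sig}$ and the PVTK signature scheme respectively, with the knowledge extractor applied to $\cA$'s output proof to route the forgery into one of the two disjuncts. Your additional remarks---that WI is unused here, that plain {\sf EUF-CMA} would already suffice, and the care needed regarding the extractor's interaction with the oracles---are accurate refinements not made explicit in the paper.
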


\begin{proof}
Our proof proceeds by contradiction. Let $\adversary$ be an attacker that succeeds with non-negligible advantage in the chosen timestamp experiment. We show how to construct a pair of algorithms $\newadversary_{1}, \newadversary_2$ such that that one of the two algorithms (respectively) succeeds with non-negligible advantage in the {\sf SUF-CMA} game against $(1)$ the signature scheme {\sf Sig}, or $(2)$ the underlying PVTK signature scheme. 
We now describe the operation of each algorithm.

{\em An attack on the signature scheme {\sf Sig}.} 
In this strategy we construct $\newadversary_1$, which conducts the {\sf SUF-CMA} experiment for the underlying signature scheme {\sf Sig}. $\newadversary_1$ first obtains a public key $\pk$ from the {\sf SUF-CMA} challenger. It next uses the PVTK signature scheme's key generation algorithm to produce a keypair $(\params, \skpvtk)$ for the PVTK service and sends $PK=(\pk, \params)$ to $\adversary$. 

Each time $\adversary$ queries the PVTK oracle on some timestamp $s$, $\newadversary_1$ implements $\provetk$ by using $\skpvtk$ to sign $s$ and return the signature $\pi_{s}$. Whenever $\adversary$ queries the TimeForge signing oracle on some tuple $(M, t, \Delta)$, $\newadversary_1$ first queries the {\sf SUF-CMA} signing oracle to obtain a signature $\sigma$ on $M \| t \| \Delta$. It then constructs a TimeForge signature by constructing the proof described in the $\provetf$ algorithm using $\sigma$ as the satisfying witness, and returns $\sigma_{\sf tf}$ to $\adversary$.

When $\adversary$ outputs a pair $(M^*, \sigma_{\sf tf}^*)$ that satisfies the win conditions of the experiment, $\newadversary_1$ parses $\sigma_{\sf tf}^*$ to obtain $(\Pi^*, t^*, \Delta^*)$ and runs the extractor on $\Pi^*$ to obtain the witness $(\sigma^*, s^*, \pi^*)$. (If the extractor fails, $\newadversary_1$ aborts.) If $\sigma^*$ is a valid signature on $M^* \| t \| \Delta$, it outputs the pair $(\sigma^*, M^*)$ as an {\sf SUF-CMA} forgery.\footnote{Note that by the restrictions on $\adversary$, $M^*$ must represent a message that has not previously been queried to the {\sf SUF-CMA} oracle.}

{\em An attack on the PVTK scheme.} In this strategy we construct  $\newadversary_2$, which conducts the {\sf SUF-CMA} experiment against the PVTK signature scheme. This algorithm proceeds as in Strategy $1$, except that here we set $\params$ to be the public key obtained from the {\sf SUF-CMA} challenger, and generate the keypair $(\pk, \sk) = \keygensig(1^{\secparam})$. Queries to the PVTK oracle are answered by forwarding $s$ to the {\sf SUF-CMA} oracle and returning the resulting signature as $\pi_{s}$, and queries to the TimeForge oracle are answered honestly by running $\provetf$ with $\sk$ as an input.

As in the previous strategy, if $\adversary$ succeeds in the experiment we run the extractor to obtain $\sigma^*, s^*, \pi^*$. However, in this case $\newadversary_2$ verifies that $\pi^*$ is a valid signature on $s^*$ and, if so, outputs $(\pi^*, s^*)$ as a forgery for the {\sf SUF-CMA} experiment. 

\medskip \noindent
{\bf Analysis.} We first observe that, in the view of $\adversary$, the simulation produced by $\newadversary_1$ and $\newadversary_2$ are distributed identically. Thus $\adversary$ must succeed with identical advantage when interacting with either simulation. Next, we point out that if the WI knowledge extractor must fail with probability at most negligible in the security parameter under the assumption that the WI proof system is sound. Thus both adversaries will abort with at most negligible probability due to extraction error.

From these observations it remains only to show that at least one of the two algorithms above must succeed with non-negligible advantage in the {\sf SUF-CMA} experiment. This is shown by observing that the attacker's output $(\Pi^*, t^*, \Delta^*)$ and the extracted witness  $(\sigma^*, s^*, \pi^*)$ must necessarily satisfy the conditions that $(1)$ for all PVTK queries $s$ made during the experiment, it holds that $s < t^* + \Delta^*$ (by the requirements of the experiment), $(2)$ the message $M^* \| t \| \Delta$ has not been queried to the {\sf SUF-CMA} signing oracle (by the requirements of the experiment), and $(3)$ the following conditions are true (by the soundness of the proof system):
$$\verifysig(\pk, \sigma^*, M^*) = 1~\vee\\~(\verifytk(\params, \pi^*, s^*) = 1~\wedge~s^* \ge t^*+{\Delta^*})$$
If $\adversary$ succeeds in the TimeForge experiment with non-negligible advantage, then the extracted witness must contain a valid signature $\sigma^*$ on a message $M^* \| t \| \Delta$ not queried to the signing oracle, {\bf or} it must contain a valid signature $\pi^*$ on some epoch $s^*$ which was not queried to the PVTK oracle. By these requirements, it must be the case that one of $\newadversary_1$ or $\newadversary_2$ succeeds with non-negligible advantage, which concludes our proof.
\end{proof}

\paragraph{Realizing the TimeForge proof system.}
TimeForge can be realized using a variety of WI and ZK proof systems, combined with efficient SUF-CMA signature schemes. For example, a number of pairing-based signature schemes~\cite{Belenkiy2009,camenisch2004signature,Belenkiy2008} admit efficient proofs of knowledge of a signature using simple Schnorr-style proofs~\cite{AHR05}. More recent proving systems such as Bulletproofs~\cite{bulletproofs} and zkSNARKs~(e.g., \cite{pghr13,groth16}) admit succinct proofs of statements involving arbitrary arithmetic circuits and discrete-log relationships. Using the latter schemes ensures short proofs, in the range of several hundred bytes, in some cases with a small, constant verification cost. Thus, even complex PVTK proofs such as block header sequences, can potentially be reduced to a succinct and readily-verified TimeForge signature.
    \ifsubmission\else
  \paragraph{Adding recipient non-attributability.} 
\fi
The basic TimeForge construction above does not provide {\em recipient non-attributability} (Definition~\ref{def:NA1}). This property allows a message {recipient} to forge a TimeForge signature prior to the time period $t + \Delta$. The scheme can be altered to achieve \emph{non-attributability for recipient servers} by modifying the algorithms to take as input the recipient's public key $\pk_{\sf recipient}$. The proof statement then becomes:
\begin{eqnarray*}
\Pi = {\sf NIPoK}\{ (\sigma, s, \pi) : \verifysig(\pk, \sigma, M) = 1~\vee\\
\verifysig(\pk_{\sf recipient}, \sigma, M) = 1~\vee\\
~(\verifytk(\params, \pi, s) = 1~\wedge~s \ge t+{\Delta}) \}
\end{eqnarray*}
This new proof statement ensures that the recipient can, at any point, generate a valid TimeForge signature for any message addressed to it, provided that it has the correct signing key.

\begin{remark}\label{rmk:RNA}
\KF and the above-described version of \TF achieve different variants of recipient non-attributability under Definition~\ref{def:NA1}.
The recipient non-attributability of \TF is weaker than \KF in the following sense.
In both \KF and \TF, a recipient client
can forge with the help of the recipient server (and a recipient server can forge by itself).
In \KF, the client needs the server's help to sign an outgoing email from the client itself
and then to receive a reply email --- these are both features routinely offered by email servers. 
However, in \TF, the client needs the server's help to sign an email purportedly authored
by the party whom the client designates as the sender of the forged email.
Since this is not a feature routinely offered by email servers (and to offer it naively would enable spam and spoofing),
it could be difficult for a client to forge without the collusion of a malicious server:
realistically, this could substantially undermine the goal of non-attributability.
An interesting open question is whether \TF's recipient non-attributability can be strengthened.
\end{remark}
\fi

\begin{figure}[!ht]
\centering
\ifsubmission
  \begin{tikzpicture}[level/.style={level distance=5mm,sibling distance=20mm/#1}]
\else
  \begin{tikzpicture}[level/.style={level distance=5mm,sibling distance=20mm/#1},scale=1.5]
\fi
\node [scale=0.6]  {MPK}
child { node (y1)  {\footnotesize 2019}
    child { node  {\footnotesize 01}
        child {node {\footnotesize 01}
            child { node (c12) {\footnotesize $C_{1}$} }
            child { node (cn2) {\footnotesize $C_{n}$} }
        }
        child {node {\footnotesize $\hdots$}}
    }
    child { node {\footnotesize $\hdots$} 
        child {node {\footnotesize $\hdots$}}
        child {node {\footnotesize $\hdots$}}
    }
}
child {node (y2)   {\footnotesize 2021}
    child { node {\footnotesize $\hdots$} 
        child {node {\footnotesize $\hdots$}}
        child {node {\footnotesize $\hdots$}}
    }
    child { node {\footnotesize 12}
        child {node {\footnotesize $\hdots$}}
        child {node {\footnotesize 30}
            child { node (c1) {\footnotesize $C_{1}$} }
            child { node (cn) {\footnotesize $C_{n}$}
                child [grow=right] {node {} edge from parent[draw=none]
                    child [grow=right] {node {\footnotesize  $\Delta$ Time Chunk} edge from parent[draw=none]
                        child [grow=up] {node {\footnotesize Days} edge from parent[draw=none]
                            child [grow=up] {node {\footnotesize Months} edge from parent[draw=none]
                                child [grow=up]  {node {\footnotesize Years} edge from parent[draw=none]
                                    child [grow=up]  {node {\footnotesize MPK } edge from parent[draw=none]
                                    child [grow=right,yshift=.25cm] {node (root) {} edge from parent[draw=none]}
                                    }
                                }
                            }
                        }
                        child [grow=right,yshift=.1cm] {node (dlt) {} edge from parent[draw=none]}
                        child [grow=right,xshift=1.4cm] {node {\footnotesize -- Sent with Email} edge from parent[draw=none]}
                    }
                }
            }
        }
    }
};
\ifsubmission
  \draw [decorate,decoration={brace,amplitude=5pt,raise=9pt}] (root) -- (dlt) node [black,midway,xshift=1.5cm] {\footnotesize Cached in DNS};
\else
  \draw [decorate,decoration={brace,amplitude=5pt,raise=9pt}] (root) -- (dlt) node [black,midway,xshift=2cm] {\footnotesize Cached in DNS};
\fi
\path (y1) -- (y2) node [midway] {\footnotesize $\hdots$};
\path (c1) -- (cn) node [midway] {\footnotesize $\hdots$};
\path (c12) -- (cn2) node [midway] {\footnotesize $\hdots$};
\path (c12) -- (cn2) node [midway] {\footnotesize $\hdots$};
\path (cn2) -- (c1) node [midway] {$\hdots$};
\end{tikzpicture}
\caption{DNS / Email Cached Public Parameters}
\label{fig:DNSCache}
\end{figure}

\section{Implementation and Evaluation}
\label{sec:impl:dns-website}

\label{sec:impl}

We have designed and implemented a prototype of \KF suitable for use with common mail (MDA/MSA) servers. The prototype is fully functional and integrates with Postfix and Sendmail. It is designed to be straightforward to extend the system to work with other mail servers, as well as other applications, due to the abstraction of core cryptographic services away from other infrastructure. The entire project consists of roughly 2,000 lines of Go, C, and Python. 
\ifsubmission{Our implementation will be available to the community as open source post-review, so as to avoid puncturing the blind review process.}\else
Our implementation is available to the community open source at \url{https://github.com/keyforgery/KeyForge}.
\fi
There are two main parts to \KF: a key server, and a mail filter that integrates \KF with mail server software, we discuss each in detail below.

\paragraph{Mail Filter:} 
The mail filter is a service that is responsible for ensuring that sent emails are properly formatted, determining if incoming emails are verifiable, and communicating these results the mail server. The filter works by intercepting emails that the mail server sends and receives, adding required metadata to the sent email's header, and requesting cryptographic operations from the key server.

When sending an email, the filter adds the expiry time and any other information necessary to verify the signature to the email's header. The mail filter performs a SHA256 hash of necessary metadata and the message's content and forwards the hash to the key server to sign, and then adds this signature to the header.

On receipt of a signed email, this service confirms that the message metadata exists, signature's hash is equivalent to the hash of the message and metadata to be verified. The filter then sends the signature, the sending domain, and the expiry timestamp to the key server for verification. If any step in the verification fails, the filter alerts the server software, and the message is dropped.

\paragraph{Key Server:}
The key server communicates with the mail filter via JSON RPC~\cite{json-rpc_nodate}, a flexible standard that should allow for easy integration into any other mail system. The key server expires old keys by publishing them on a constantly updated website, though this is an implementation detail that could be augmented in any number of ways.

\ifsubmission
  For full details of cryptographic primitives and curve parameters, see Section~\ref{sec:impcrypto}.
\else
  We use the RELIC\cite{relic-toolkit} cryptographic library's implementation of a 12-degree Barreto-Lynn-Scott curve with $q = 2^{381}$ (BLS12-381). This configuration conservatively yields keys with a 128-bit security level,\cite{aaranha_implementing_2012} which puts our implementation on par with the standard 2048-bit RSA seen in current DKIM implementations.\footnote{We originally considered using a 256-bit Barreto-Naehrig curve, which had a native Go implementation and was close to becoming an IETF standard. Unfortunately, new attacks\cite{kim_extended_2016} lowered the curve's security rating to below the RSA-equivalent 128-bits.} RELIC was chosen due to its support for many pairing friendly curves and computational overhead.
\fi

\subsection{Evaluation}

\paragraph{Bandwidth:} Table~\ref{fig:bandwidth} illustrates \KF's bandwidth costs. 
The second row reflects that email headers and DNS records require that keys be base64 encoded.
Our amortized bandwidth per email is 42\% \emph{smaller} than a DKIM RSA 2048 signature.

A difficulty with most HIBS schemes is that their signatures can be quite large. For example, in our modified Gentry-Silverberg scheme, a signature created from a particular chunk must include public parameters for each node on the path from that chunk to the Master Public Key. A public parameter in our scheme is 97 bytes, so the parameters required for a signature from a tree structured with Y/M/D/Chunk would wind up being $97*4=388$ bytes. The total size also includes the signature point (49 bytes), for a total of 437 bytes.

We further decrease the bandwidth and storage costs our implementation by precomputing the path parameters and storing them in DNS along with the Master Public Key (Figure~\ref{fig:DNSCache}). At verification, the server performs a DNS lookup for the appropriate expiry parameters, and stores them in a local cache. 
This yields a one-time amortized cost of $~$4kb per server per month.

\begin{table*}
\centering
\begin{tabular}{r|ccccc}  
    \toprule
                        & Secret Keys & Public Params & \KF $\sigma$ (w/ cache) & DKIM RSA2048 $\sigma$ \\
    \midrule
    Raw Size (B)    &    64       &  97               &        $49 + 97 = 146$    & 256 \\
    Base64 Size (B) &    88       &  132              &         $201$             & 347 \\
    \bottomrule
\end{tabular}
\caption{Amortized bandwidth costs of various parts of the system, as well as DKIM with RSA}
\label{fig:bandwidth}

\end{table*}

\begin{figure*}[t!]
\begin{subfigure}[t]{0.33\textwidth}
  \centering
  \includegraphics[width=0.95\textwidth]{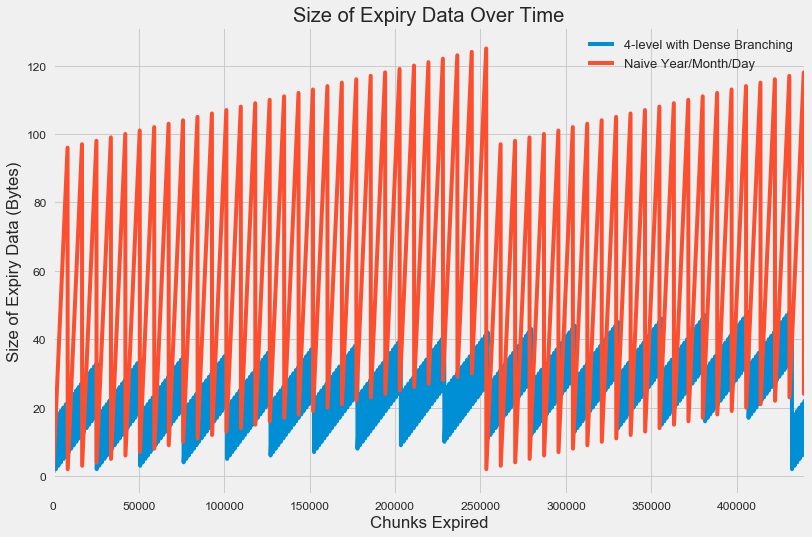}
  \vspace{0in}
  \caption{\scriptsize Size of expiry data given different tree topologies.}
  \label{fig:topo}
\end{subfigure}%
~
\begin{subfigure}[t]{0.33\textwidth}
  \centering
  \includegraphics[width=0.95\textwidth]{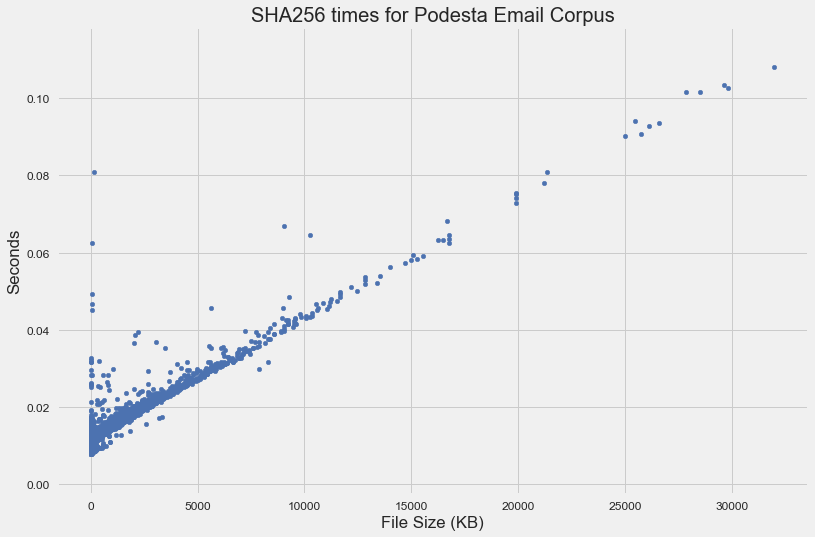}
  \vspace{0in}
  \caption{\scriptsize The time it takes to perform a Sha256 hash of particular messages in the Podesta Email Corpus}
  \label{fig:size}
\end{subfigure}%
~
\begin{subfigure}[t]{0.33\textwidth}
  \centering
  \includegraphics[width=0.95\textwidth]{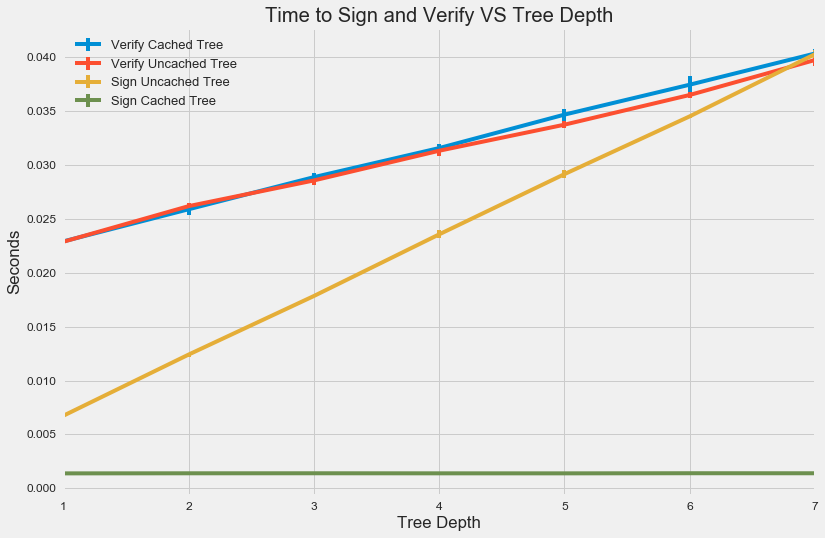}
  \vspace{0in}
  \caption{\scriptsize Time required to sign and verify given whether or not the tree was allowed to pre-cache or is calculating based off of a brand new tree ``cold.''}
  \label{fig:depth}
\end{subfigure}

\smallskip

\caption{Evaluation of Bandwidth and Computation Speed}
\end{figure*}

\paragraph{Computation Costs:} We created a number of microbenchmarks to determine the efficiency of our system, as well as the impact of varying the height of the tree. In particular, evaluated the impact of tree depth on signing and verification time. It is important to note that his is a lower bound on throughput -- all benchmarks were performed on a laptop with power lower than a common server.\footnote{For full details on compilation and machine specifics see appendix \ref{sec:compilation}}

To fully capture the best and worst case for signing and verification, we timed execution given a tree where the public key path from the root must be recalculated each time (as would be common in standing up a brand new server, or initially verifying messages from a new domain) and when the tree is allowed cache such parameters; results are illustrated in Figure~\ref{fig:depth}. We find that signing is largely unaffected by the depth of the tree with caching enabled, which would likely be the case in the majority of settings. Further, we find that increasing the depth of the tree has a linear impact on verification time. 

Cached verification with a depth-4 tree (as in our Y/M/D/Chunk tree) takes an average of 0.031579 seconds per verification operation; one laptop alone could verify roughly 2,735,995 messages per day. Cached signing for the same tree takes 0.001388 seconds per signing operation, which means that same computer could sign roughly 62,247,838 messages per day. 

Signing and verification times are unaffected by plaintext size, since the key server performs operations only on SHA256 hashes of the message. Since hash-and-sign is the common paradigm between DKIM and \KF, both incur these same costs. Figure~\ref{fig:size} illustrates hashing time using OpenSSL's hashing library for different message sizes, using a corpus of real-world leaked emails from Wikileaks' Podesta Email Corpus \cite{wikileaks_wikileaks_2016}.\footnote{Beyond the dramatic irony, we chose the Podesta email corpus because it was distributed intact with attachments, and thus relatively close to a realistic user's email distribution, among publicly available datasets.} We find that the average time to hash for all messages in the corpus is around 0.0102 seconds, indicating that I/O and hashing time are likely as impactful, if not more, than the time it takes to sign and verify.

\paragraph{Optimizing for Expiration Bandwidth:} Succinct expiry information may be more or less of a priority depending on the desired application and associated cost in computation time. For example, succinct expiry information would allow for easy publication (even in DNS) or lightweight attachment to emails. The latter idea has the neat feature of guaranteeing that any person who received emails from this domain possesses expiry information.

While the Y/M/D/Chunk tree configuration is easy to intuit, having an equal branching factor across all tree levels yields a large gain in succinctness. Figure~\ref{fig:topo} illustrates the difference in size between the naive configuration and a depth 4 tree whose branching factor is equal at every level. Table \ref{table:exp} 
\ifsubmission (in Appendix~\ref{appx:expirysize}) \fi 
shows the average and maximum size of expiry info of various depth trees with an equal branching factor: e.g., the average expiry size for a full 2-year period is 4.5MB, 4KB, or 1.8KB for depths 1, 4, and 7 respectively. An administrator's preferred trade-off between expiry size, depth (i.e., verification time), and validity time of the tree likely depends on the application at hand.

\ifsubmission\else \begin{table}
  \centering
  \caption{Expiry information size}
  \begin{tabular}{|c|c|c|c|c|}
  \hline
  \multirow{3}{*}{$L$} & \multicolumn{4}{c|}{Expiry info size (bytes)} \\
  \cline{2-5}
  & \multicolumn{2}{c|}{1 year} & \multicolumn{2}{c|}{2 years} \\
  \cline{2-5}
  & Avg & Max & Avg & Max \\
  \hline
  1 & 1121248 & 2242496 & 1679814 & 4485056 \\
  2 & 12700 & 25344 & 16934 & 33792 \\
  3 & 3283 & 6464 & 3920 & 7744 \\
  4 & 1787 & 3520 & 2016 & 3968 \\
  5 & 1275 & 2496 & 1408 & 2752 \\
  6 & 1048 & 2048 & 1117 & 2176 \\
  7 & 859 & 1664 & 934 & 1792 \\
  \hline
  \end{tabular}
  \label{table:exp}
\end{table}

 \fi

\paragraph{Optimizing the expiry time:} 
While expiry time is a configurable parameter of \KF (i.e., an administrator may select any appropriate time), there are clear benefits to keeping this time as short as possible.
Indeed, the amount of universal forgeability that a user can enjoy depends directly on the expiry time.

Despite that most emails are received very quickly, the SMTP RFC (RFC 5321) \cite{RFC5321} has a very lax give-up time of 4--5 days. To get a rough idea of how quickly emails tend to be delivered, we computed the time differences from the first Received header to the last, in the Podesta email corpus, and found that of the 48,246 messages with parseable Received timestamps, 99\% (47,349) took less than 15 minutes to be delivered.

We believe that this time can therefore be shortened considerably using any number of tactics. For example, delays are often caused by expected server outages on the receiving end, e.g., by an administrator using the weekend to update email servers. 
This could be resolved by using a DMARC-like DNS record to signal that maintenance is happening and to hold messages until later.

Shortening the expiry time in the face of routing delays is made slightly trickier due to the presence of potentially malicious adversaries; providing flow control similar to that of TCP would certainly be systematically possible, but implementers must be careful to ensure that a malicious MTA is not attempting to keep the message unforgeable for undue time. Setting a hard-cutoff maximum here of one to two days would likely be advisable.

\section{Related Work and Discussion}
\label{sec:related}

A number of works have considered the problem of {\em deniability} in various settings. 

\itparagraph{Deniable authentication.} The cryptographic literature features many works on deniable signatures and authentication, including (but in no way limited to)~\cite{Jakobsson1996designated,dwork98concurrent,RST01,naor2002deniable,DiRaimondo2009}. Indeed, our constructions can be viewed as a practical instantiation of a deniable signature scheme, with tight systems-based requirements on the deniability properties. A much stronger primitive, {\em deniable encryption}~\cite{canetti1997deniable} considers encryption schemes that can survive the later compromise of an encrypting device.
 
\itparagraph{Messaging deniability.} In the context of interactive messaging, the OTR protocol~\cite{borisovOTR} uses MACs and key agreement protocols to ensure that message transcripts can be repudiated by either party to a conversation. Along similar lines, Unger and Grubbs considered deniable key exchanges for new messaging systems~\cite{unger2015deniable}. Adida {\em et al.} proposed ``lightweight email signatures'' for DKIM, based on ring signatures~\cite{AHR05}, but with more limited properties than we consider in this work. Recently, Gunn {\em et al.} noted that trusted hardware systems can be used to circumvent deniability in many protocols~\cite{gunn2018circumventing}. In the theoretical realm, Canetti  {\em et al.}'s recent possibility result addressed a yet stronger deniability guarantee for interactive messaging, where transcripts can be ``explained'' as encrypting any plaintext of the adversary's choice, even when both parties are coerced \cite{CPP18}.

\itparagraph{Message franking.}
Finally, a recent line of works~\cite{dodis2018franking,grubbs17franking} considers the problem of {\em franking} in encrypted messaging systems. These protocols consider a different problem than we address in this work: they allow attribution of messages even after an arbitrary period, but only with the cooperation of a centralized service provider. One such system, based on a secret HMAC key, has been deployed by Facebook~\cite{messengerFranking}.

\itparagraph{Complications from ARC.} As mentioned in Section~\ref{sec:bg}, ARC is experimental and widely unadopted as a standard, but would pose some minor complications for non-attributability if adopted, because it has third-party MTAs sign emails in transit. \KF, though presented as a modification of DKIM, is conceptually straightforward to extend to accommodate ARC, by having third-party signer MTAs to use a FFS for signing, publish expiry information, and offer a forge-on-request protocol, just like the MSA in \KF.

\ifsubmission\else
  \section*{Acknowledgements}
  We are grateful to Jon Callas for his in-depth review of the applicability of our scheme to DKIM, and to Dan Boneh, Daniel J. Weitzner, John Hess, Bradley Sturt, and Stuart Babcock for their feedback. This work was supported in part by the William and Flora Hewlett Foundation grant 2014-1601.
\fi

\bibliographystyle{plain}
\bibliography{refs}

\appendix
\section{Compilation and Evaluation parameters}
\label{sec:compilation}
We performed all benchmarks on a 2017 MacBook Pro, 15-inch, with an Intel 4-core 3.1GHz processor and 16GB of RAM. 

Openssl was compiled with the following flags:
\begin{verbatim}
clang -I. -I.. -I../include  -fPIC -fno-common 
-DOPENSSL_PIC -DOPENSSL_THREADS -D_REENTRANT
-DDSO_DLFCN -DHAVE_DLFCN_H -arch x86_64 -O3 
-DL_ENDIAN -Wall -DOPENSSL_IA32_SSE2 
-DOPENSSL_BN_ASM_MONT
-DOPENSSL_BN_ASM_MONT5 -DOPENSSL_BN_ASM_GF2m 
-DSHA1_ASM -DSHA256_ASM -DSHA512_ASM -DMD5_ASM 
-DAES_ASM -DVPAES_ASM -DBSAES_ASM -DWHIRLPOOL_ASM 
-DGHASH_ASM -DECP_NISTZ256_ASM
\end{verbatim}

RELIC was compiled with the following flags:
\begin{verbatim}
cmake ../ -DALLOC=DYNAMIC -DFP_PRIME=381 
-DARITH=gmp-sec -DWSIZE=64 
-DFP_METHD="INTEG;INTEG;INTEG;MONTY;LOWER;SLIDE"
-DCOMP="-O3 -mtune=native -march=native" 
-DFP_PMERS=off -DFP_QNRES=on 
-DFPX_METHD="INTEG;INTEG;LAZYR"
-DEP_SUPER=off -DPP_METHD="LAZYR;OATEP"
\end{verbatim}

\ifsubmission
  \section{FFS vs. FSS}\label{appx:fss}

\fi

\ifsubmission
  \section{$\Compress$}\label{appx:compress}

\fi

\ifsubmission
  \section{FFS security proof for Construction~\ref{constr:full}}\label{appx:ffs}

  \begin{customthm}{\ref{thm:ffs}}
  If $\HIBS$ is a secure HIBS,
  Construction~\ref{constr:full} instantiated with $\HIBS$ is a 
  \ifsubmission{FFS}\else{forward-forgeable signature scheme}\fi.
  \end{customthm}
  
\fi

\ifsubmission
  \section{\KF Non-Attributability Proofs}\label{appx:kf}

  \begin{customthm}{\ref{thm:ffs1}}
  \KF is non-attributable for recipients (Definition~\ref{def:NA1})
  and $\DE$-universally non-attributable (Definition~\ref{def:NA2}).
  \end{customthm}
  
\fi

\ifsubmission
  \section{PVTK Proposals}\label{appx:pvtk}

\fi

\ifsubmission
  \section{Recipient non-attributability of \TF}\label{appx:RNA-TF}

\fi

\ifsubmission
  \section{Further Discussion of \TF}\label{appx:tf-discussion}

\fi

\ifsubmission
  \section{Size of expiry information}\label{appx:expirysize}

\fi

\ifsubmission
  \section{Full Cryptographic Implementation Details}\label{sec:impcrypto}

\fi

\end{document}